\title{Sync or Sink: Bounds on Algorithmic Collective Action with Noise and Multiple Groups}
\author{Aditya Karan}
\author{%
  Aditya Karan \\
  Siebel School of Computing and Data Science\\
  University of Illinois at Urbana-Champaign\\
  Urbana, IL 61820 \\
  \texttt{karan2@illinois.edu} \\
  \And
  Prabhat Kalle \\
  Siebel School of Computing and Data Science\\
  University of Illinois at Urbana-Champaign\\
  Urbana, IL 61820 \\
  \texttt{pskalle2@illinois.edu} \\
  \AND
  Nicholas Vincent \\
  Simon Fraser University \\
  Burnaby, British Columbia \\
  \texttt{nvincent@sfu.ca} \\
  \And
  Hari Sundaram \\
  Siebel School of Computing and Data Science \\
  Urbana, IL 61820 \\
  \texttt{hs1@illinois.edu} \\
}
\newtheorem{theorem}{Theorem}
\newtheorem{corollary}{Corollary}[theorem]
\DeclareMathOperator*{\argmax}{arg\,max}
\DeclareMathOperator*{\E}{\mathbb{E}}
\date{March 2025}
\begin{document}

\maketitle

\begin{abstract}
Collective action against algorithmic systems provides an opportunity for a small group of individuals to strategically manipulate their data to get specific outcomes, from classification to recommendation models. This effectiveness will invite more growth of this type of coordinated actions, both in the size and the number of distinct collectives. With a small group, however, coordination is key. Currently, there is no formal analysis of how coordination challenges within a collective can impact downstream outcomes, or how multiple collectives may affect each other's success. In this work, we aim to provide guarantees on the success of collective action in the presence of both coordination noise and multiple groups. Our insight is that data generated by either multiple collectives or by coordination noise can be viewed as originating from multiple data distributions. 
Using this framing, we derive bounds on the success of collective action. We conduct experiments to study the effects of noise on collective action. We find that sufficiently high levels of noise can reduce the success of collective action. In certain scenarios, large noise can sink a collective success rate from $100\%$ to just under $60\%$. We identify potential trade-offs between collective size and coordination noise; for example, a collective that is twice as big but with four times more noise experiencing worse outcomes than the smaller, more coordinated one.
This work highlights the importance of understanding nuanced dynamics of strategic behavior in algorithmic systems. 

\end{abstract}

\section{Introduction}

As large platforms (social media, e-commerce) grow, groups within them find it increasingly important and necessary to act strategically (e.g., by changing ratings) in these platforms to get their desired outcome. From delivery drivers coordinating to earn higher wages \cite{bonini2024algorithms}, to fans promoting their favorite artist \cite{xiao_lets_2025, Nugent_2021}, to protesting controversial businesses \cite{payne2024review}, this type of strategic behavior is poised to grow- \cite{sigg_decline_2024}. Strategic behavior ultimately affects the downstream data distributions used to train models. As the incentive to participate increases, we need a more nuanced understanding of the impacts on a system; this includes both understanding what happens with multiple distinct collectives acting on a system as well as considering how internal collective dynamics may impact outcomes. These factors will all impact the downstream data distribution. However, currently, there are no tools or analytical frameworks that incorporate these factors (coordination noise, multiple collectives) to assess their impact on the success of collective action.

In classical collective action problems related to managing common-pool resources, Ostrom notes that smaller, more homogeneous groups are more effective at overcoming barriers to collective action~\cite{ostrom_governing_1990}. She finds that a clear organizational structure and well-defined roles are crucial—one possible interpretation is that smaller groups can more readily and faithfully implement and coordinate their actions. An analogy in the computational world is “noise”: unwanted modifications or deviations from intended behavior (e.g., signal processing). This “noise” can result from poor coordination and may reduce effectiveness.

In algorithmic settings, collective action is mediated through strategic data modifications to influence model behavior. These changes in data distribution can have significant downstream ramifications. \cite{hardt_algorithmic_2023} provided theoretical bounds on the success of collective action in the case of a single, unified collective. The collective modifies their data (e.g., by inserting a watermark), which is then used in training. The learning algorithm observes a distribution $\mathcal{P}$: a mixture of data induced by the collective's action $(\mathcal{P}_1)$ and the underlying data distribution $\mathcal{P}_0$. These bounds help identify the conditions under which collective action may be more successful. However, this analysis assumes perfect coordination within the collective. \cite{karan2025algorithmiccollectiveactioncollectives} developed a framework to identify the factors affecting collective action involving multiple collectives, using simulations to illustrate several outcomes. However, they do not provide any theoretical bounds on success based on the data distributions induced by each collective. A formal understanding of how both noise and multiple collectives interact is key to understanding real-world algorithmic collective action.

In this work, we provide new guarantees on algorithmic collective action in both the presence of noise and multiple distinct collectives. We do this by introducing \textit{multiple distributions} that feed the ultimate observed data distribution $\mathcal{P}$. Instead of considering just a mixture of $\mathcal{P}_0$ and $\mathcal{P}_1$, we study how the addition of $\mathcal{P}_2$ affects a collective's success. Importantly, the source of these distributions can vary: it could occur from multiple distinct collectives or because of coordination challenges within a single collective inducing a new, second data distribution. 

A collective that is highly coordinated is one where the size of $\mathcal{P}_2$ is small and similar to $\mathcal{P}_1$, while ones with less cohesion have a larger gap between $\mathcal{P}_1$ and $\mathcal{P}_2$. Multiple collectives can be analyzed as behavior coming from two distinct distributions. Our contributions are as follows:

\begin{description}[style=unboxed,leftmargin= .25cm]
    \item [Theoretical bounds with multiple distributions:] We are the first to establish the lower bounds for the success of collective action in the presence of multiple distributions. Prior work has either only analyzed a single distribution coming from a single collective or provided empirical outcomes with multiple collectives \cite{karan2025algorithmiccollectiveactioncollectives, hardt_algorithmic_2023}. We are able to relate this bound to the similarity of the distributions and collective size. This approach can effectively handle many different scenarios: including when there's a fraction of a collective performing actions imperfectly or when multiple distinct collectives are acting upon a system.
    \item[Empirical impact of noise:] We study the impact of noise in collective action. Prior work \cite{hardt_algorithmic_2023} has only shown outcomes with perfectly coordinated collectives. We find that different types of noise can impact the success of collective action and find in some situations, smaller, less noisy groups perform better than larger groups with more noise.  
\end{description}

We first present a formulation of the algorithmic collective action problem. We establish bounds on the success of collective action with multiple distributions. Our experiments extend \cite{hardt_algorithmic_2023} by considering a text classification task with noise. We find scenarios where noise sinks the success of a task from nearly $100\%$ without noise to under $60\%$. We discuss possible trade-offs in size and noise where small groups with less noise ($0.25\%$ collective size with $10\%$ noise) outperform larger ones with more noise ($0.5\%$ with $40\%$ noise), which can inform organizers of collective action.

\section{Related Work}

Collective action against algorithms has been documented in contexts such as ridesharing~\cite{lei_delivering_2021, wells_just--place_2021, woodside_bottom-up_2021, jeroen_platform_2021, hastie_platform_2020} and in data campaigns aimed at promoting pro-social outcomes~\cite{malchik_data_2022, vincent_can_2021, vincent_data_2021, wu2022reasonable}. \cite{sigg_decline_2024} provides a specific computational model for the delivery driver case, which can be used to analyze incentives and outcomes in these scenarios. \cite{xiao_lets_2025, xiao_it_2025} examine some of the real-world structures involved in promoting online collective action, such as what motivates certain users to act as organizers. They also discuss practical challenges in influencing online marketplaces, including coordinating across different types of devices and teaching users how to engage effectively with the platform to achieve a specific goal. The concept of noise has been used to simulate communication challenges in agent-based modeling~\cite{shirly2013, acquisti_agent-based_2002, beerman_framework_2023} and in modeling bounded rationality~\cite{kahneman2003maps, conlisk1996bounded, RePEc:edn:esedps:169}. However, it has not been studied in the context of online collective action.

Other work has examined collective action in algorithmic settings. \cite{hardt_algorithmic_2023} first examines how different types of strategies can enable small collectives to have a substantial impact on classification outcomes. \cite{baumann2024algorithmic} extends this analysis to examples involving recommender systems. \cite{ben-dov_role_2024} examines how the specifics of the learning algorithm impact collective action. \cite{karan2025algorithmiccollectiveactioncollectives} empirically analyzes the interaction between two collectives. None of these works, however, offer formal guarantees for successful collective action involving multiple data distributions. Our work offers a more comprehensive theoretical understanding of realistic strategic dynamics.

\section{Problem Formulation}
Here we first will describe collective action as described by \cite{hardt_algorithmic_2023}. We will then extend this scenario to include a second distribution. For convenience, \Cref{sec:appendix-notation-table} summarizes the notation.

\subsection{Algorithmic Collective Action with One Distribution}
\label{sec:baseline}

We consider the case where a firm wishes to deploy a classifier $f$ trained on some data. Let $f: \mathcal{X} \rightarrow \mathcal{Y}$. We define a classifier to be $\epsilon_{1}$ suboptimal under a distribution $P_1$ if there exists a $P'$ with $TV(P_1, P') \leq \epsilon_1$ such that  $f = \argmax_{y \in \mathcal{Y}} P'(y | x)$ where TV is the total variation distance.

The firm's goal is to minimize the loss with respect to some objective function. The data they train on exists in $\mathcal{Z} = \mathcal{X} \times \mathcal{Y}$ representing the features $\mathcal{X}$ and labels $\mathcal{Y}$. For our purposes, we can think of every data point $z \in \mathcal{Z}$ as belonging to an individual. A collective wishes to work together in order to create a specific outcome on this classifier for a certain set of inputs which we formalize below. 

Consider the base distribution $\mathcal{P}_0$ on $\mathcal{Z}$. We define $\mathcal{P}_1$ to be the distribution induced by the collective's intended action. This is operationalized by strategy, $h_1 : \mathcal{Z} \rightarrow \mathcal{Z}$ where $h_1 \in \mathcal{H}$ the set of all available strategies. Examples of potential strategies may include watching a video for a certain amount of time, giving a specific review score, etc. In implementing this strategy, the collective wishes to create an association between the specific inputs to a target label $y^*$.

The collective plants a ``signal", which is done via function $g_1 : \mathcal{X} \rightarrow \mathcal{X}$ which takes some original input $x \sim \mathcal{P}_0$ and modifies it. Examples may include adding certain words to text, adding a watermark to a video, or reviewing an extra product on a marketplace. This planted signal is intended to create the association between the set of inputs generated by $g_1(x)$ and a target label $y^*$.  

In some situations, it may also be possible for the collective to act on both the input data $(\mathcal{X})$ and potentially the output label $(\mathcal{Y})$. For the \textit{feature-label} strategy, members can each change their feature and label. For the \textit{feature-only} strategy, they can only change their features. These strategies, using $h_1$ and signal $g_1$, induce the collective's distribution $\mathcal{P}_1$. Let $\alpha_1$ be the fraction of users participating in the collective. We can write the distribution seen by the learning algorithm as a mixture of the distribution of the collective's data and the unmodified data
\[\mathcal{P} = \alpha_1 \mathcal{P}_1  + (1 - \alpha_1) \mathcal{P}_0\]
where $\mathcal{P}_1$ represents the distribution of $h_1(z)$ where $z \sim \mathcal{P}_0$.
The collective's objective is to associate the signal $g_1$ with the target $y^*$. The success rate can be written as
\[S(\alpha_1) = \Pr_{x \sim \mathcal{P}_0}[f(g_1(x)) = y^*] \]
Success also depends on how the signals generated by the collective conflict or compete with data present in the underlying distribution. Let $\mathcal{X}_{1} = \{g_1(x) : x \in \mathcal{X} \}$ the signal set. 
We define this as suboptimality gap of $\mathcal{X}_1$ on distribution $\mathcal{P}_0$ for a target $y^*$ as 
\[\Delta^0_{1}(y^*) = \max_{x \in \mathcal{X}_1} (\max_{y \in \mathcal{Y}} \mathcal{P}_0(y | x) - \mathcal{P}_0 (y^* | x))\]

We also use $\mathcal{P}_0(\mathcal{X}_1)$ to refer to the background overlap of signal set $\mathcal{X}_1$  - intuitively, this represents how common the modified input $\mathcal{X}_1$ is in the background distribution $\mathcal{P}_0$
Based on these definitions, \cite{hardt_algorithmic_2023} finds a lower bound on the success for this classifier in the single, perfectly coordinated collective for both the \textit{feature-label} strategy and the \textit{feature-only} strategy, restated here.

\begin{theorem}[Feature-label]
\cite{hardt_algorithmic_2023}
The success rate of a collective with size $\alpha_1$ on a $\epsilon_1$ classifier with target $y_1^*$ with signal set $\mathcal{X}_1$ is bounded below by 
$$S(\alpha_1) \geq 1 - \frac{1-\alpha_1}{\alpha_1} \mathcal{P}_0(\mathcal{X}_1)\frac{(1 - \epsilon_1)\Delta_1^0(y_1^*) + \epsilon_1}{1 - 2\epsilon_1} $$
\label{thm:feature-label-base}
\end{theorem}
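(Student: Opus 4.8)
The plan is to bound the failure probability $1 - S(\alpha_1) = \Pr_{x \sim \mathcal{P}_0}[f(g_1(x)) \neq y_1^*]$ directly. First I would reduce this to a statement about a ``mislabeled signal'' region. Writing $E = \{x' \in \mathcal{X}_1 : f(x') \neq y_1^*\}$ for the set of signal points the classifier gets wrong, and using that the feature-marginal of $\mathcal{P}_1$ at $x'$ is exactly the $\mathcal{P}_0$-probability that $g_1$ produces $x'$ (since $\mathcal{P}_1$ is the pushforward of $\mathcal{P}_0$ under the feature map $g_1$), I get $1 - S(\alpha_1) = \mathcal{P}_1(E)$. So it suffices to upper bound the $\mathcal{P}_1$-mass of $E$.

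Next I would exploit the $\epsilon_1$-suboptimality. By definition there is a $P'$ with $TV(\mathcal{P}, P') \leq \epsilon_1$ and $f = \argmax_{y} P'(y \mid x)$. Hence for every $x' \in E$ there is a competing label $\hat y \neq y_1^*$ with $P'(\hat y \mid x') \geq P'(y_1^* \mid x')$, i.e. the classifier has been ``flipped'' away from $y_1^*$ at $x'$. The key quantitative step is to charge the cost of each flip against the perturbation budget: flipping $x'$ forces the $L_1$ discrepancy between $\mathcal{P}$ and $P'$ restricted to the column $x'$ to be at least the observed margin $\mathcal{P}(x', y_1^*) - \mathcal{P}(x', \hat y)$, and summing these disjoint column-costs over $x' \in E$ against $2\,TV(\mathcal{P}, P') \leq 2\epsilon_1$ controls how much of $E$ can be flipped.

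The third step is to expand the observed margin through the mixture $\mathcal{P} = \alpha_1 \mathcal{P}_1 + (1-\alpha_1)\mathcal{P}_0$. On the signal column the planted feature--label pairs contribute $+\alpha_1 \mathcal{P}_1(x')$ toward $y_1^*$, while the background can oppose $y_1^*$ by at most $\Delta_1^0(y_1^*)$ per unit of $\mathcal{P}_0$-mass, by the very definition of the suboptimality gap. Summing the background term over $E \subseteq \mathcal{X}_1$ produces the factor $(1-\alpha_1)\mathcal{P}_0(\mathcal{X}_1)\Delta_1^0(y_1^*)$, which is where the background-overlap quantity $\mathcal{P}_0(\mathcal{X}_1)$ enters; rearranging to isolate $\mathcal{P}_1(E)$ yields a bound of the stated shape, with $(1-\alpha_1)/\alpha_1$ appearing from dividing through by $\alpha_1$.

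I expect the main obstacle to be pinning down the exact constant $\frac{(1-\epsilon_1)\Delta_1^0(y_1^*) + \epsilon_1}{1 - 2\epsilon_1}$, and in particular the denominator $1 - 2\epsilon_1$. A crude accounting that treats $\epsilon_1$ as a free additive budget only gives an additive $2\epsilon_1/\alpha_1$ term rather than one scaled by $\mathcal{P}_0(\mathcal{X}_1)$, which is strictly looser. To recover the sharp factor I would instead derive a per-point guaranteed-success threshold of the form $\alpha_1 \mathcal{P}_1(x') > (1-\alpha_1)\mathcal{P}_0(x')\,\frac{(1-\epsilon_1)\Delta_1^0(y_1^*) + \epsilon_1}{1 - 2\epsilon_1}$ and then integrate the complementary failure region against $\mathcal{P}_0$ over $\mathcal{X}_1$. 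The delicate point is that the flip condition must be compared to the perturbed conditional margin \emph{relative to} the total mass $\mathcal{P}(x')$ at $x'$, so that the $\epsilon_1$ budget enters proportionally to the local mass; carefully tracking this normalization is what makes $\epsilon_1$ appear in both the numerator and the denominator and produces the $1 - 2\epsilon_1$ factor rather than a loose additive term.
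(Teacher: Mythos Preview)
Your final approach is exactly the paper's (which follows Hardt et al.): derive the per-point sufficient condition $\alpha_1\mathcal{P}_1(x') > (1-\alpha_1)\mathcal{P}_0(x')\cdot\frac{(1-\epsilon_1)\Delta_1^0(y_1^*)+\epsilon_1}{1-2\epsilon_1}$, then apply $\mathbf{1}\{1-A\ge 0\}\ge 1-A$ and take $\E_{x\sim\mathcal{P}_1}$, using $\E_{\mathcal{P}_1}[\mathcal{P}_0(x)/\mathcal{P}_1(x)]=\mathcal{P}_0(\mathcal{X}_1)$ --- equivalent to your failure-set integration. The one step you leave imprecise is how that constant actually falls out; the paper invokes Lemma~11 of Hardt et al., which says an $\epsilon_1$-suboptimal classifier outputs $y^*$ at $x$ whenever $\mathcal{P}(y^*\mid x)>\mathcal{P}(y\mid x)+\frac{\epsilon_1}{1-\epsilon_1}$, and multiplying through by $\mathcal{P}(x)$, expanding the mixture, and isolating $\alpha_1\mathcal{P}_1(x)$ produces the numerator $(1-\epsilon_1)\Delta_1^0(y_1^*)+\epsilon_1$ and the $1-2\epsilon_1$ denominator directly --- no separate global-TV bookkeeping is needed, so your initial detour can be dropped entirely.
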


\begin{theorem}[Feature-only]
\cite{hardt_algorithmic_2023}
The success rate of a collective with size $\alpha_1$ on a $\epsilon_1$-suboptimal classifier with target $y_1^*$ with distribution $\mathcal{X}_1$ where there exists a $p$ such that $\mathcal{P}_0(y^* | x) \geq p, \forall x \in \mathcal{X}$ is bounded below by 
$$S(\alpha_1) \geq 1 - \frac{1-(1 - \epsilon_1) p - \epsilon_1 \alpha} { (1 - \epsilon_1) p \alpha - \epsilon_1 \alpha} \mathcal{P}_0(\mathcal{X}_1) $$
\label{thm:feature-only-base}
\end{theorem}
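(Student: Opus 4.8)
The plan is to control the failure probability $1 - S(\alpha) = \Pr_{x\sim\mathcal{P}_0}[f(g_1(x)) \neq y^*]$ by isolating the region of the signal set on which the deployed classifier fails to output $y^*$, and showing that this region must absorb a disproportionate share of background mass. Since every modified point lands in $\mathcal{X}_1$, I would first rewrite the failure probability as the measure, under the collective's feature marginal $\mu_1$ (the law of $g_1(x)$ for $x\sim\mathcal{P}_0$), of the bad set $B = \{x' \in \mathcal{X}_1 : f(x') \neq y^*\}$. This reduces the theorem to an upper bound on $\mu_1(B)$.

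Next I would analyze the Bayes-optimal classifier under the observed mixture $\mathcal{P} = \alpha\mathcal{P}_1 + (1-\alpha)\mathcal{P}_0$. For a signal point $x'$, I would expand the joint mass $\mathcal{P}(x', y)$ into its collective and background components and use the hypothesis $\mathcal{P}_0(y^* \mid x) \geq p$ to lower-bound the target's contribution from the collective (whose labels are inherited from the base distribution) and to bound the mass any single competing label can accumulate. The aim of this step is a clean sufficient condition for $\argmax_y \mathcal{P}(y \mid x') = y^*$, phrased as an inequality comparing the collective's $y^*$-mass at $x'$ against the background mass available on $\mathcal{X}_1$.

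I would then fold in the $\epsilon_1$-suboptimality. By definition the deployed $f$ is the argmax of some $P'$ with $TV(\mathcal{P}, P') \leq \epsilon_1$, so the perturbation can move at most $\epsilon_1$ worth of mass between labels; propagating this through the sufficient condition replaces the clean thresholds by their $\epsilon_1$-discounted counterparts, which is what I expect to generate the factors $(1-\epsilon_1)p$ and $\epsilon_1\alpha$ in the stated coefficient. A Markov-type aggregation over $B$ — summing the per-point failure inequalities and bounding the total background contribution by $\mathcal{P}_0(\mathcal{X}_1)$ — would then give the upper bound on $\mu_1(B)$, and hence the theorem.

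The main obstacle I anticipate is the simultaneous bookkeeping of two adversarial effects at the per-point level: the background's worst-case label configuration on the rare signal region, where only the aggregate guarantee $\mathcal{P}_0(y^* \mid x) \geq p$ is available so a single competitor could in principle concentrate, and the $\epsilon_1$ budget of the TV perturbation, which further erodes the target's margin. Recovering the exact coefficient requires allocating these two budgets tightly rather than through loose union bounds, and checking that the denominator $(1-\epsilon_1)p\alpha - \epsilon_1\alpha$ stays positive — i.e., that the collective's effective signal strength $(1-\epsilon_1)p$ exceeds the noise floor $\epsilon_1$ — is precisely the regime in which the bound is non-vacuous.
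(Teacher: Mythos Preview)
The paper does not itself prove this statement; Theorem~\ref{thm:feature-only-base} is restated verbatim from \cite{hardt_algorithmic_2023} with no accompanying argument. The nearest thing to a proof in this paper is Appendix~D, which establishes the two-distribution extension (Theorem~\ref{thm:feature-only-two}) and specializes to the single-collective case when $\alpha_2=0$. That argument follows exactly the template you outline: expand the mixture $\mathcal{P}(x^*,y)$ at each signal point, use $\mathcal{P}_0(y^*\mid x)\ge p$ to lower-bound the target's mass and upper-bound any competitor's mass, derive a per-point sufficient condition on $\alpha_1$, invoke the TV-perturbation lemma (Lemma~11 of \cite{hardt_algorithmic_2023}) to absorb the $\epsilon_1$ slack, and then aggregate via the inequality $\mathbf{1}\{A\ge 0\}\ge A$ summed against $\mathcal{P}_0(g_1^{-1}(x^*))$ over $\mathcal{X}_1$. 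Your ``Markov-type aggregation over $B$'' is the same step phrased in the complementary direction. In short, your plan is correct and coincides with the method the paper (and the cited source) use.
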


However, these theorems do not consider a second distribution. We first develop our multiple distribution setup, which we will use to expand beyond the above bounds.

\subsection{Collective Action with Multiple Distributions}
Here, we extend beyond \cite{hardt_algorithmic_2023} by considering another data distribution. This may arise from internal coordination challenges or a separate collective.

We consider distribution $\mathcal{P}_2$ induced by a strategy $h_2$. $h_2$ can be completely distinct from $h_1$ or can represent a noisy version of $h_1$ (e.g $h_2(z) = h_1(z) + \delta$). $\mathcal{P}_2$ is the distribution of $h_2(z), z \sim \mathcal{P}_0$. For example, suppose a collective aims to create an association between a word $A$ existing at the top of a document and being rated as an important user $y^*$. Members could mistakenly use a similar word $B$ instead of $A$ or place $A$ at the end of the text instead -- this would be represented by $\mathcal{P}_2$.
Let $\alpha_1$ represent the fraction of people who implemented collective one's strategy, associated with $\mathcal{P}_1$ and $\alpha_2$ represent the fraction of people inducing distribution $\mathcal{P}_2$. We let $\alpha = \alpha_1 + \alpha_2$
We can write the final distribution observed by the learning model as 
$$\mathcal{P} = \alpha_1 \mathcal{P}_1 + \alpha_2 \mathcal{P}_2 + (1 - \alpha)\mathcal{P}_0$$ 

 We note that the $\mathcal{P}_2$ can arise for many reasons : perhaps the correct ``action" was not properly shared with all group members, people getting confused about what exactly to do and not executing faithfully, or potential infiltrators within the group.
We can also equivalently consider $r = \frac{\alpha_1}{\alpha_1 + \alpha_2} = \frac{\alpha_1}{\alpha}$ as the fraction of people implementing the correct strategy. 

Suppose a collective that wants to plant a single $g(x)$ where $x \in \mathcal{X}$. Let the signal set $\mathcal{X}_1 = \{g_1(x) : x \in \mathcal{X})$. 
Another group implements another signal, which may be a noisy version of $g_1$. Let $g_2$ represent the set of signals produced by the second group.  
$\mathcal{X}_2 = \{g_2(x) : x \in \mathcal{X}\}$.

The presence of another distribution requires us to expand upon our definition of suboptimality to consider, across any two distributions, how often the signals from set $\mathcal{X}_i$ are present on the distribution $P_j$. Our expanded definition of suboptimality is:
\[\Delta^j_{i}(y^*) = \max_{x \in \mathcal{X}_i} (\max_{y \in \mathcal{Y}} \mathcal{P}_j(y | x) - \mathcal{P}_j (y^* | x))\] 
Intuitively, this measures how ``confusing''  signals from $\mathcal{X}_{i}$ looks to $\mathcal{P}_j$ when targeting $y^*$.

\section{Multiple Distributions and Success of Collective Action}

Based on our expanded definitions, we can quantify the impact that multiple distributions have on the effectiveness of a collective. We consider both the feature-label strategy and the feature-only strategy. Both are important to study, as in different contexts, collectives may be able to alter both their input features and labels, or just their features. We state the theorems for two distributions here and defer the proofs and generalization to $n$ distributions to the \Cref{sec:appendix_n}. While earlier work focused on a single, perfectly coordinated collective, we quantify how either internal disruptions or the presence of a second collective affects the original group's success. Conceptually, the proofs follow by considering the new mixture distribution and the interaction between the new distributions. 

For the feature-label strategy, we have the following result (see \Cref{sec:appendix-proof-feature-label} for full proof):

\begin{theorem}[Feature-label with two distributions]
Consider distribution $\mathcal{P}_1$ and $\mathcal{P}_2$ which are distributed according to $h_1(x)$ and $h_2(x)$ respectively, where $x \sim \mathcal{P}_0$. Let $y_1^*$ be the target class. Then success for the first collective against an $\epsilon_1$ classifier to be lower bounded by
\[S(\alpha_1) \geq 1 - \frac{\alpha_2}{\alpha_1} \mathcal{P}_2(\mathcal{X}_1) \frac{(1 - \epsilon_1)\Delta_1^2(y_1^*) + \epsilon_1}{1 - 2\epsilon_1} - \frac{1-\alpha}{\alpha_1} \mathcal{P}_0(\mathcal{X}_1)\frac{(1 - \epsilon_1)\Delta_1^0(y_1^*) + \epsilon_1}{1 - 2\epsilon_1} \]
\label{thm:feature-label-two}
\end{theorem}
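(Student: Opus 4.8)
The plan is to generalize the single-distribution argument behind \Cref{thm:feature-label-base} by treating the observed distribution as the three-component mixture $\mathcal{P} = \alpha_1\mathcal{P}_1 + \alpha_2\mathcal{P}_2 + (1-\alpha)\mathcal{P}_0$ and tracking two competing sources ($\mathcal{P}_2$ and $\mathcal{P}_0$) instead of one. First I would reduce the quantity of interest: since the collective draws $x\sim\mathcal{P}_0$, applies $g_1$, and the feature marginal of $\mathcal{P}_1$ is exactly the pushforward of $\mathcal{P}_0$ under $g_1$, the success rate equals the $\mathcal{P}_1$-feature-mass of the set where $f$ predicts $y_1^*$. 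Writing $B = \{\tilde{x}\in\mathcal{X}_1 : f(\tilde{x})\neq y_1^*\}$, the goal becomes an upper bound on $1 - S(\alpha_1) = \mu_1(B)$, the collective's feature mass $\mu_1$ on the ``bad'' signal points.

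Next I would write out the posterior of the mixture at a signal point $\tilde{x}\in\mathcal{X}_1$. Because the feature-label strategy places all of collective~1's mass at label $y_1^*$, that source contributes $\alpha_1\mu_1(\tilde{x})$ toward $y_1^*$, while $\mathcal{P}_2$ and $\mathcal{P}_0$ push competing labels. The central new step --- and the one that produces the two separate penalty terms --- is a decomposition of the combined competing margin by subadditivity of the label-wise maximum:
\[
\max_{y}\big[\alpha_2\mathcal{P}_2(\tilde{x},y) + (1-\alpha)\mathcal{P}_0(\tilde{x},y)\big] - \big[\alpha_2\mathcal{P}_2(\tilde{x},y_1^*) + (1-\alpha)\mathcal{P}_0(\tilde{x},y_1^*)\big] \le \alpha_2\mathcal{P}_2(\tilde{x})\Delta_1^2(y_1^*) + (1-\alpha)\mathcal{P}_0(\tilde{x})\Delta_1^0(y_1^*).
\]
Each summand is controlled by the corresponding suboptimality gap $\Delta_1^j(y_1^*)$, which is precisely what lets $\mathcal{P}_2$ and $\mathcal{P}_0$ enter the final bound through their own overlaps $\mathcal{P}_2(\mathcal{X}_1)$ and $\mathcal{P}_0(\mathcal{X}_1)$ rather than through a single coarser competing distribution.

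I would then import the $\epsilon_1$-suboptimality accounting from the single-distribution proof. Since $f$ is the Bayes classifier of some $P'$ with $TV(\mathcal{P},P')\le\epsilon_1$, a point can lie in $B$ only if the perturbation overcomes the decomposed Bayes margin at $\tilde{x}$; the same worst-case reallocation of the TV budget used in \Cref{thm:feature-label-base} converts each per-point margin condition into the factor $\frac{(1-\epsilon_1)\Delta_1^j(y_1^*) + \epsilon_1}{1 - 2\epsilon_1}$. Because both the $\Delta$-linear part and the $\epsilon_1$-intercept of this factor are additive across the two competitors and share the denominator $1-2\epsilon_1$, the accounting lifts term by term. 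Aggregating over $\tilde{x}\in B\subseteq\mathcal{X}_1$, pulling out the scalar factors, bounding $\sum_{\tilde{x}\in B}\mathcal{P}_j(\tilde{x})\le\mathcal{P}_j(\mathcal{X}_1)$, and dividing by $\alpha_1$ yields the two stated penalty terms.

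The hard part will be the third step: the TV perturbation budget $\epsilon_1$ is a single global quantity over $\mathcal{Z}$, so it is not a priori obvious that the failure it can induce splits cleanly into a $\mathcal{P}_2$-term and a $\mathcal{P}_0$-term each carrying the identical single-distribution factor. The resolution is that the adversarial $P'$ need only defeat the \emph{combined} margin, which the subadditivity bound above already expresses as a sum of two nonnegative contributions plus a shared slack; one then checks that the $\epsilon_1$-intercept aggregates exactly to the effective-background form $\frac{1-\alpha_1}{\alpha_1}\,\widetilde{\mathcal{P}}_0(\mathcal{X}_1)\frac{\epsilon_1}{1-2\epsilon_1}$ with $\widetilde{\mathcal{P}}_0 = (\alpha_2\mathcal{P}_2 + (1-\alpha)\mathcal{P}_0)/(1-\alpha_1)$, confirming no budget is double-counted. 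A secondary check is that the decomposition does not silently assume $g_1$ is injective; I would phrase the aggregation at the level of the signal set $\mathcal{X}_1$ so that overlapping preimages are handled directly through the marginal masses.
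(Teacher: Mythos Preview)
Your proposal is correct and follows essentially the same route as the paper: both expand the three-component mixture, use that $\mathcal{P}_1$ concentrates all mass on $y_1^*$ to obtain a pointwise margin inequality whose competing part splits additively into a $\mathcal{P}_2$-term and a $\mathcal{P}_0$-term via the respective suboptimality gaps, and then handle the $\epsilon_1$-suboptimal classifier by the same $\frac{\epsilon_1}{1-\epsilon_1}$ shift (Lemma~11 of \cite{hardt_algorithmic_2023}). The only cosmetic difference is the aggregation step---you sum the reversed margin inequality over the failure set $B$ to bound $\mu_1(B)$ directly, whereas the paper applies the indicator bound $\mathbf{1}\{1-t\ge 0\}\ge 1-t$ and takes expectations under $\mathcal{P}_1$; both yield the identical two-term penalty.
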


Compared with \Cref{thm:feature-label-base}, \Cref{thm:feature-label-two} introduces another term that describes the relationship between $\mathcal{P}_1$ and $\mathcal{P}_2$ independently of the relationship between $\mathcal{P}_0$ and $\mathcal{P}_1$. It consists of the cross-signal overlap between signal set $\mathcal{X}_1$ on distribution $\mathcal{P}_2$ the suboptimality gap of the target $y_1^*$ of signal set $X_1$ and $\mathcal{P}_2$. ($\Delta_1^2(y_1^*)$) and the relative sizes of said groups $\frac{\alpha_2}{\alpha_1}$. \Cref{fig:varying_bounds} illustrates these relationships.

\begin{figure*}[t]
   \centering
   \begin{subfigure}{0.45\textwidth}
       \centering
       \includegraphics[width=\linewidth]{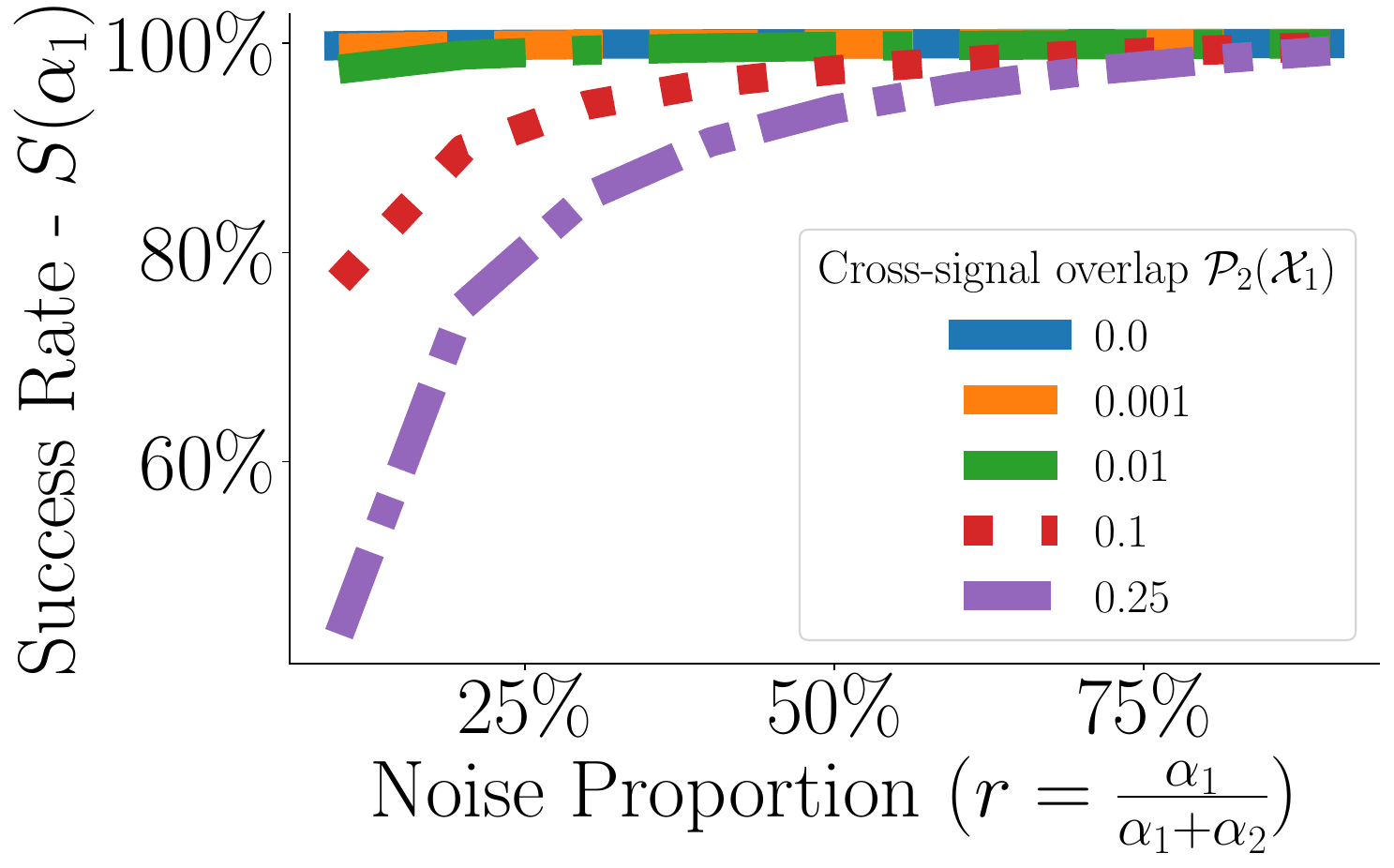}
       \caption{Varying $\mathcal{P}_2(\mathcal{X}_1)$ for fixed $\Delta_2^1(y_1^*) = 0.25$}
       \label{fig:varying_suboptimal}
   \end{subfigure}
   \begin{subfigure}{0.45\textwidth}
       \centering
       \includegraphics[width = \linewidth]{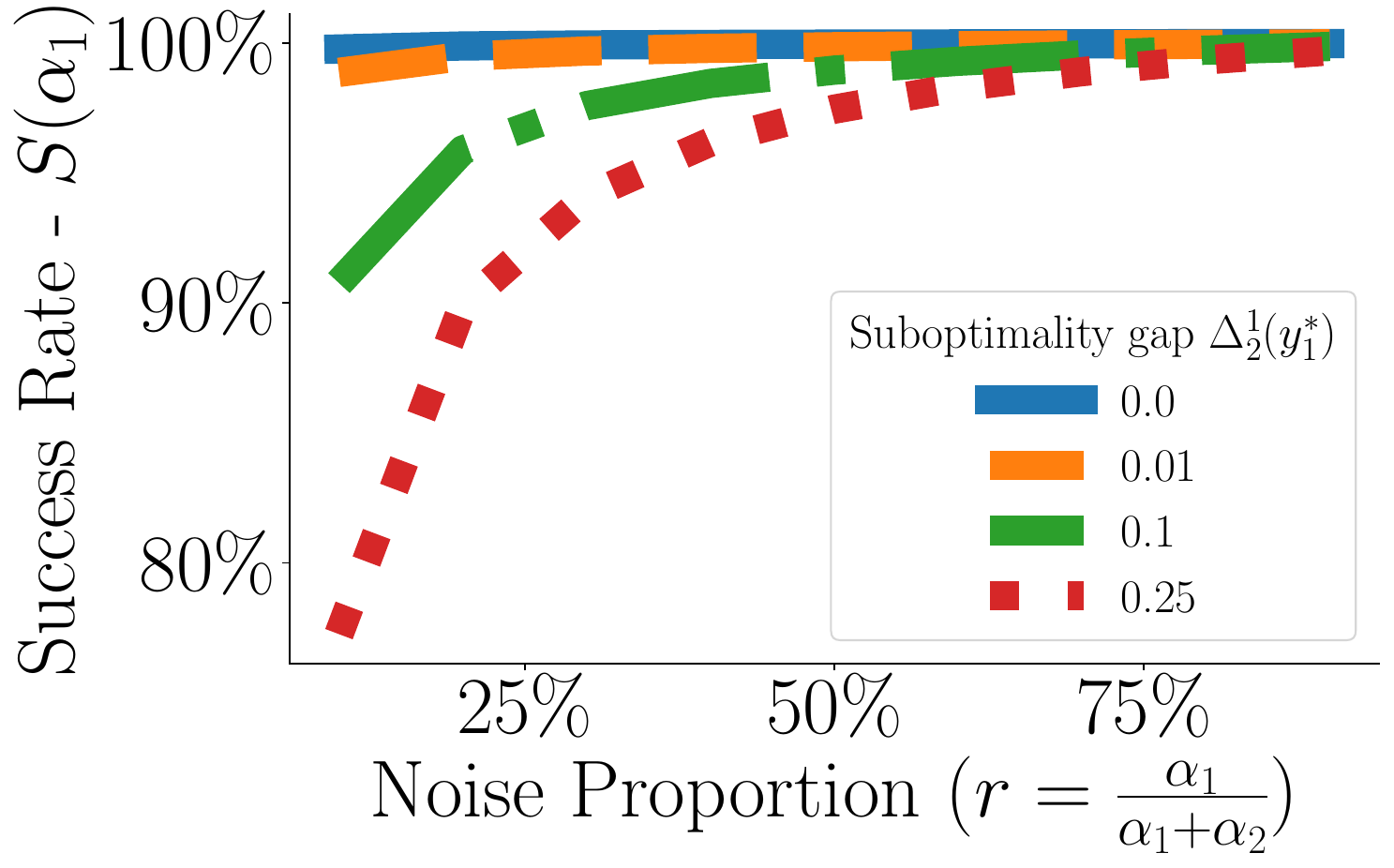}
   \caption{Varying $\Delta_2^1(y_1^*)$ for fixed $\mathcal{P}_2(\mathcal{X}_1) = 0.25$}
   \label{fig:varying_p2}
   \end{subfigure}
   \caption{Relationship between $\Delta_2^1(y_1^*)$, $\mathcal{P}_2(\mathcal{X}_1)$ and success bound. We fix $\alpha_1 + \alpha_2 = 0.3$ and vary the proportion $r$. We fix $\epsilon_1 = 0$, $\Delta_0^1 = 0.01$, and $\mathcal{P}_0(\mathcal{X}_1) = 0.01$}
   \label{fig:varying_bounds}
\end{figure*}

If we consider the distributions as coming from two collectives with differing objectives, we can also write down the success bound for the second collective as well.

\begin{corollary}
If we consider $\mathcal{P}_2, \mathcal{X}_2$ as coming from a 2nd collective with (potentially) distinct objective, we can write the probability of success as

\[S(\alpha_2) \geq 1 - \frac{\alpha_1}{\alpha_2} \mathcal{P}_1(\mathcal{X}_2) \cdot \frac{(1 - \epsilon_2)\Delta_2^1(y_2^*) + \epsilon_2}{1 - 2\epsilon_2} - \frac{1-\alpha}{\alpha_2} \mathcal{P}_0(\mathcal{X}_2)\frac{(1 - \epsilon_2)\Delta_2^0(y_2^*) + \epsilon_2}{1 - 2\epsilon_2} \]

\end{corollary}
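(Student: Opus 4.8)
The plan is to derive this bound as an immediate consequence of \Cref{thm:feature-label-two} by exploiting the symmetry between the two collectives in the observed mixture $\mathcal{P} = \alpha_1 \mathcal{P}_1 + \alpha_2 \mathcal{P}_2 + (1-\alpha)\mathcal{P}_0$. The key observation is that $\mathcal{P}$ is invariant under the simultaneous relabeling $(\alpha_1, \mathcal{P}_1, \mathcal{X}_1, y_1^*, \epsilon_1) \leftrightarrow (\alpha_2, \mathcal{P}_2, \mathcal{X}_2, y_2^*, \epsilon_2)$, because the total mass $\alpha = \alpha_1 + \alpha_2$, and hence $1-\alpha$, is symmetric in the two indices. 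Under this swap the cross-signal suboptimality gaps transform as $\Delta_1^2 \mapsto \Delta_2^1$ and $\Delta_1^0 \mapsto \Delta_2^0$, while the overlap terms transform as $\mathcal{P}_2(\mathcal{X}_1) \mapsto \mathcal{P}_1(\mathcal{X}_2)$ and $\mathcal{P}_0(\mathcal{X}_1) \mapsto \mathcal{P}_0(\mathcal{X}_2)$. Applying \Cref{thm:feature-label-two} with the second collective placed in the ``target'' role and the first collective together with the base distribution playing the ``competing'' roles then yields precisely the stated bound on $S(\alpha_2)$.

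The one genuinely substantive step is the suboptimality hypothesis. \Cref{thm:feature-label-two} assumes $f$ is $\epsilon_1$-suboptimal, implicitly measured with respect to $\mathcal{P}_1$ (the collective whose success is being bounded). To state the corollary for collective two I would make explicit the assumption that $f$ is $\epsilon_2$-suboptimal with respect to $\mathcal{P}_2$; this is a genuinely distinct condition on the classifier, not one inherited automatically from the first collective's assumption. Before invoking the relabeling I would also verify that the proof of \Cref{thm:feature-label-two} nowhere privileges collective one — for instance, that it never relies on an ordering such as $\alpha_1 \geq \alpha_2$ — so that the two collectives enter the mixture decomposition on an equal footing.

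The main obstacle is therefore not a calculation but a bookkeeping verification: confirming that the definitions of the expanded suboptimality gap $\Delta^j_i(y^*)$, the signal sets $\mathcal{X}_i$, and the overlap quantities $\mathcal{P}_j(\mathcal{X}_i)$ are all index-symmetric, so that the relabeling carries each term in the bound of \Cref{thm:feature-label-two} to exactly its counterpart in the corollary. Since each of these objects is defined by a formula that is symmetric under exchanging the indices $1$ and $2$, I expect this check to go through cleanly, and the corollary then follows with no additional analytic work.
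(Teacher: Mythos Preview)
Your approach is correct and matches the paper's: the corollary is stated immediately after \Cref{thm:feature-label-two} without a separate proof, so it is clearly meant to follow by the index-swap symmetry you describe. One small clarification: the $\epsilon$-suboptimality of $f$ is measured with respect to the observed mixture $\mathcal{P}$ (see the proof in \Cref{sec:appendix-proof-feature-label}, where the condition is $TV(\mathcal{P},\mathcal{P}')\le\epsilon_1$), not with respect to $\mathcal{P}_1$ or $\mathcal{P}_2$ individually, so $\epsilon_1$ and $\epsilon_2$ are not genuinely distinct hypotheses on the classifier but simply the same suboptimality parameter carried through the relabeling.
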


For the feature-only strategy we have the following result  (\Cref{sec:appendix-proof-feature-only} for full proof)

\begin{theorem}[Feature-only with two distributions]
Consider distribution $\mathcal{P}_1$ and $\mathcal{P}_2$ which are distributed by $h_1(x)$ and $h_2(x)$ respectively, where $x \sim \mathcal{P}_0$. Suppose there exist a $p$ such that $\mathcal{P}_0(y^* | x) \geq p, \forall x \in \mathcal{X}$. Then success for the first collective against an $\epsilon_1$ classifier (against $P^*_1)$ is lower bounded by
$$S(\alpha_1) \geq 1 - \frac{\alpha_2}{\alpha_1} \cdot \frac{\mathcal{P}_2(X_1^*) \cdot \Delta_1^2(y^*) (1 - \epsilon_1)}{ p(1 - \epsilon_1) - \epsilon_1} - \frac{1-\alpha}{\alpha} \mathcal{P}_0(\mathcal{X}_1) \cdot \frac{ (1 - p)(1 - \epsilon_1) +\epsilon_1}{ p(1 - \epsilon_1) - \epsilon_1} $$
\label{thm:feature-only-two}
\end{theorem}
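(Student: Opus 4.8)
The plan is to mirror the single-distribution argument of \Cref{thm:feature-only-base} but to carry the second mixture component $\alpha_2 \mathcal{P}_2$ through the bookkeeping explicitly. I would begin by reducing the success event to a statement about the joint mass on the signal set: since $f = \argmax_{y} \mathcal{P}'(y\mid x)$ for some $\mathcal{P}'$ with $TV(\mathcal{P},\mathcal{P}') \le \epsilon_1$, the classifier outputs $y^*$ on a planted point $\tilde x \in \mathcal{X}_1$ exactly when $\mathcal{P}'(\tilde x, y^*) \ge \mathcal{P}'(\tilde x, y')$ for every competitor $y'$. Writing $S(\alpha_1) = 1 - \Pr_{x \sim \mathcal{P}_0}[f(g_1(x)) \ne y^*]$, it suffices to upper bound the $\mathcal{P}_0$-measure of the ``bad'' signal points on which some $y'$ ties or beats $y^*$ under $\mathcal{P}'$.

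Next I would decompose the observed joint mass at each $\tilde x \in \mathcal{X}_1$ along the mixture $\mathcal{P} = \alpha_1 \mathcal{P}_1 + \alpha_2 \mathcal{P}_2 + (1-\alpha)\mathcal{P}_0$ and isolate the three competing effects. The collective's own contribution is a \emph{help}: because the feature-only strategy keeps natural labels and $\mathcal{P}_0(y^*\mid x)\ge p$, we get $\mathcal{P}_1(y^*\mid \tilde x) \ge p$, so $\alpha_1\mathcal{P}_1$ deposits at least a $p$-fraction of its mass on the target. The $\mathcal{P}_2$ and $\mathcal{P}_0$ terms are \emph{hurts}: the most a competitor can exceed $y^*$ under $\mathcal{P}_2$ (resp.\ $\mathcal{P}_0$) on the signal set is governed by the suboptimality gap $\Delta_1^2(y^*)$ (resp.\ by $1-p$), scaled by the mixing weights $\alpha_2$ (resp.\ $1-\alpha$) and the overlaps $\mathcal{P}_2(\mathcal{X}_1^*)$ (resp.\ $\mathcal{P}_0(\mathcal{X}_1)$). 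This step is precisely what produces the two separate penalty terms of the theorem.

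I would then fold in the $\epsilon_1$ slack. A worst-case $\mathcal{P}'$ can shift at most $\epsilon_1$ of TV budget to flip argmaxes; per unit of collective mass this converts the raw target margin $p$ into the \emph{effective} net margin $p(1-\epsilon_1)-\epsilon_1$, which is exactly the denominator appearing in both penalty terms, while inflating each competing numerator by the $(1-\epsilon_1)$ and additive $\epsilon_1$ factors seen in the statement. Summing the flip condition over the bad set and normalizing the competing mass by the collective's contribution yields the prefactors $\tfrac{\alpha_2}{\alpha_1}$ and $\tfrac{1-\alpha}{\alpha}$; a Markov-type averaging argument over $\mathcal{X}_1$ then turns the pointwise margin condition into the claimed lower bound on $S(\alpha_1)$. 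Setting $\alpha_2=0$ (so $\alpha=\alpha_1$) should collapse the expression back toward \Cref{thm:feature-only-base}, which I would use as a consistency check.

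The main obstacle I anticipate is the label bookkeeping peculiar to the feature-only regime: unlike the feature-label case, the collective cannot force the label $y^*$, so its positive signal is only the weak guarantee $p$, and this must be played off simultaneously against two distinct sources of competing mass. Allocating the finite $\epsilon_1$ TV budget correctly across signal points, deciding which distribution's suboptimality gap controls each competitor, and ensuring the normalization keeps the effective margin $p(1-\epsilon_1)-\epsilon_1$ positive (so the bound is non-vacuous, which requires $p > \epsilon_1/(1-\epsilon_1)$) are the delicate points where the two-distribution accounting most easily goes wrong.
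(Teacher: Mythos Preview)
Your proposal is correct and follows essentially the same route as the paper: expand the mixture at each signal point $\tilde x \in \mathcal{X}_1$, lower-bound the collective's contribution via $\mathcal{P}_1(y^*\mid \tilde x)\ge p$, upper-bound the competing $\mathcal{P}_2$ and $\mathcal{P}_0$ terms by $\Delta_1^2(y^*)$ and $1-p$ respectively, invoke the $\epsilon_1$-suboptimality lemma (Lemma~11 of \cite{hardt_algorithmic_2023}) to produce the $p(1-\epsilon_1)-\epsilon_1$ denominator, and then convert the pointwise sufficient condition on $\alpha_1$ into the bound on $S(\alpha_1)$ by the indicator trick $\mathbf{1}\{A\ge 0\}\ge A$ summed over $\mathcal{X}_1$ (your ``Markov-type averaging'' is exactly this step).
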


In this bound we see the same $\mathcal{P}_2(\mathcal{X}_1) \Delta_2^1(y_1^*) \alpha_2 (1 -\epsilon_1)$ term which captures the cross-signal overlap, suboptimality gap and collective size.

These theorems illustrate how and to what extent the presence of a second distribution can hinder the first collective. If we consider the second distribution to be a noisy variation of the first one, this helps relate the noise characteristics to the success rate. For a second collective, these bounds provide insights into which scenarios collectives may be simultaneously successful or hindering each other. We examine these implications in \Cref{sec:discus_multiple_collectives}.

\section{Experimental Setup}
To demonstrate the empirical implications, we study the case of noisy collective action. We extend \cite{hardt_algorithmic_2023} experiment on resume classification. We use the resume dataset introduced by \cite{jiechieu2021skills} to finetune a BERT-based model for a multilabel prediction task. The goal for the classifier is to predict which set of careers someone is suited for based on the text in the resume. Members of the collective intend to plant a signal in their resume to get the resume classified to some target class $y^*$ inserting a specific character in a certain pattern. 
The intended signal in this case is to place this specific character every 20 words (full details in \Cref{sec:appendix-experimental-details}). 

We vary $r$, the proportion of users who perform an imperfect collective action by performing a noisy variation. Specifically, we consider the following types of noise variations.

\begin{description}[style=unboxed,leftmargin= .25cm]
\item[Correct Character Usage] The collective attempts to place a specific character into the resume. A noisy variation involves using a different character. We consider variations where the wrong character is sampled across a small subset (Random-Subset), as well sampled randomly across all possible characters (Random-Full). 

\item[Modification Location] The collective intended action to place their character every $20$ words. A noisy variation places the character in arbitrary locations (Displaced).

\end{description}

These choices are motivated by considering ``benign'' ways a group of people may misinterpret instructions. If the intended instruction is ``Place character `A' every 20 words'', some users may focus more on the 20 words part and not use the right character, or some users may focus on using ``A'' and not place it every 20 words. We define the full set of variations in \Cref{tab:experiment-variation}.

We also consider the context in which the noise occurs. 

\begin{description}[style=unboxed,leftmargin= .25cm]
\item[Noised Input] For feature-label strategy, it is possible that the noise only gets applied in the input feature or gets applied to the feature as well as the label.

\item[Target's Underlying Frequency] The target class underlying frequency (in $\mathcal{P}_0$) has shown to play some impact in algorithmic collective action, especially with the feature only strategy \cite{hardt_algorithmic_2023}. We test a high frequency label (Software Developer) as well as a low frequency label (Database Administrator) as the target class. 
\end{description}

Based on these factors, we ask the following questions.

\begin{description}[style=unboxed,leftmargin= .25cm]
    \item [RQ1:] Does the frequency of noise (larger $r$'s) affect success more than noise variation?
    \item [RQ2:] What class of strategies are more sensitive to noisy behaviors: feature-label or feature-only?
    \item [RQ3:] How does the background frequency ($\mathcal{P}_0$) of the target class affect sensitivity to noise?

\end{description}

\begin{table}[!ht]
\begin{center}
\begin{tabular}{p{0.2\linewidth} | p{0.7\linewidth}}
\toprule
\textbf{Variation Name} & \textbf{Variation Description} \\
\hline
\midrule
 Baseline & All members use the same characters and place every 20 words  \\
\hline
 Random-Subset & Some members insert a different character than the signal character, chosen from a small, fixed subset, placed every 20 words. \\
\hline 
Random -Full & Some members insert a different character than the signal character, chosen from a large, fixed subset, placed every 20 words. \\
\hline 
Displaced-Original &Some members place the collective's signal character at arbitrary places in the text \\
\hline
Displaced-Full  & Some members insert a randomly picked character from a large subset at arbitrary places in the text. \\
\bottomrule
\end{tabular}
\end{center}
\caption{List of variations for noise deviations}
\label{tab:experiment-variation}
\end{table}

\section{Results}

We finetune \texttt{distilbert-based-uncased} ~\cite{Sanh2019DistilBERTAD} for five epochs with default hyperparameters using Hugging Face transformer library \cite{wolf2020huggingfaces}. 
 For the feature-label strategies, we vary $\alpha$, the percent of users in the collective from $0-1\%$. For the feature only strategies, we vary $\alpha$'s from $[0\%, 50\%]$.We consider noise rates $r$ from $[0\%,50\%]$ (hence $\alpha_1 = r\alpha$). Noise variations for input text is detailed in \Cref{tab:experiment-variation}. For noising labels, the label is changed uniformly at random across all possible labels. We measure success by calculating how often the target class ($y_1^*$) is predicted for text that has the desired modification $x^*$ in the test set. All figures are shown with a one standard deviation region shaded.

\subsection{RQ1: Different Noise Variations}

Here we consider the several types of noise variations (as described in \Cref{tab:experiment-variation}). As a reminder, these different noise variations change how the different members of the collective may incorrectly implement a strategy. The $x$-axis represents the collective size while the $y$-axis show the collective's success criteria. \Cref{fig:comparisons} shows how different noise variations result in different efficacies, both for low levels of noise (\Cref{fig:comparisons-0.1}) and higher levels (\Cref{fig:comparisons-0.5}). We find that different levels of noise have more of an impact when the collective size is relatively small; for sufficiently high levels of participation, all variants perform similarly and comparable to the baseline even with noise. At this lower levels, there is a sensitivity to the set of ``wrong'' characters to choose from: the Random-Subset strategy performs better than the Random-Full strategy. We see displacement while keeping the original character performs better than the Random-Subset - which keeps the placement consistent. For this specific model, this may imply that it is more sensitive to character coordination than the specific placement. We also find that across the different noise types, the size of the collective seems can overcome noise. 

\begin{figure*}[t]
    \centering
    \begin{subfigure}{0.45\textwidth}
        \centering
        \includegraphics[width=\linewidth]{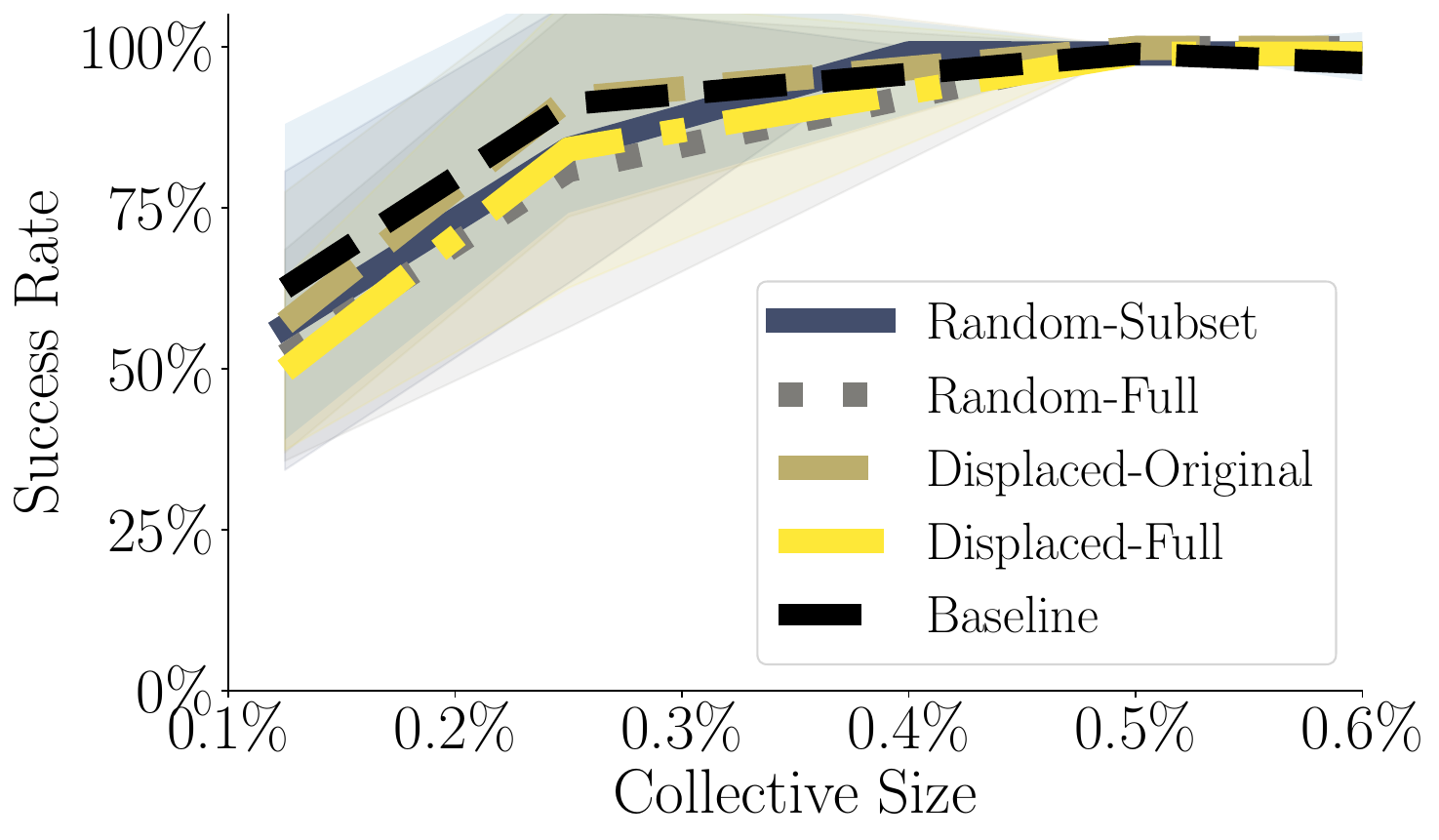}
        \caption{$r= 10\%$}
        \label{fig:comparisons-0.1}
    \end{subfigure}
    \begin{subfigure}{0.45\textwidth}
        \centering
        \includegraphics[width = \linewidth]{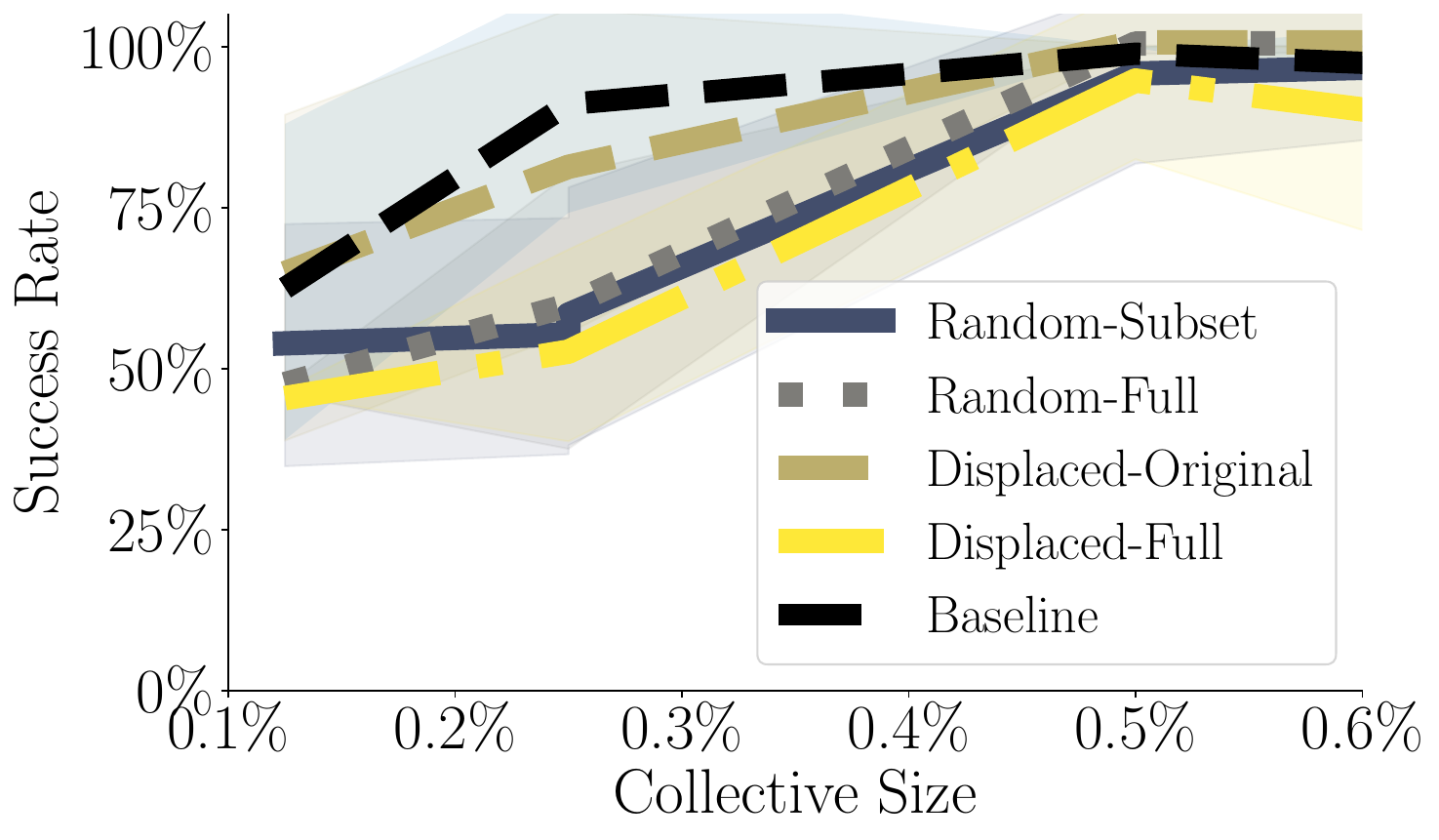}
    \caption{$r=50\%$ }
    \label{fig:comparisons-0.5}
    \end{subfigure}
    \caption{Comparison of different noise variations for different collective sizes. The x-axis here is the total collective size, measured by the  percentage of the total population is participating. The $y$-axis is the success rate of causing resumes with the ``true'' signal to be classified to the target class. The black dashed line represents success rate without noise. The strategy is feature-label where noise is only applied to the features. We observe that making more characters for mistakes leads to lower success (Random-Subset vs Random-Full and Displaced-Original vs Displaced-Full)}
    \label{fig:comparisons}
\end{figure*}

\subsection{RQ2: Feature-Label vs Feature-Only}

Here we examine how noise affects success for the feature-label vs feature-only strategy. We fix a level of participation and examine how success changes as noise increases. 
\Cref{fig:rq2-software} shows the decline in success for feature-label and feature-only strategies as a function of the percentage of the collective subject to noise. We examine three cases, a feature-label scenario where noise is applied to both the feature and the label (\Cref{fig:feature-label-feature-label}), a feature-label scenario where noise is applied to just the feature (\Cref{fig:feature-label-feature}) and the feature-only scenario (\Cref{fig:feature-only-software}).

In \Cref{fig:feature-label-feature-label} we see with moderate levels of participation ($0.5\%$) that noise has a major impact (from nearly $100\%$ success rate to below $60\%$). Even at higher levels of participation ($1\%$) we see, at the high end, noise impacting success. However, in \Cref{fig:feature-label-feature}, we see relative robustness to noise at the higher levels of participation ($0.5\%$ and above). The impact of noise in the feature-only strategy is a more gradual decline (comparisons are not apples to apples because the feature-only strategy requires higher participation levels for comparable success). Overall, noise that affects labels has a more significant impact on a collective's success.

\begin{figure*}[t]
    \centering
    \begin{subfigure}{0.32\textwidth}
        \centering
        \includegraphics[width = \linewidth]{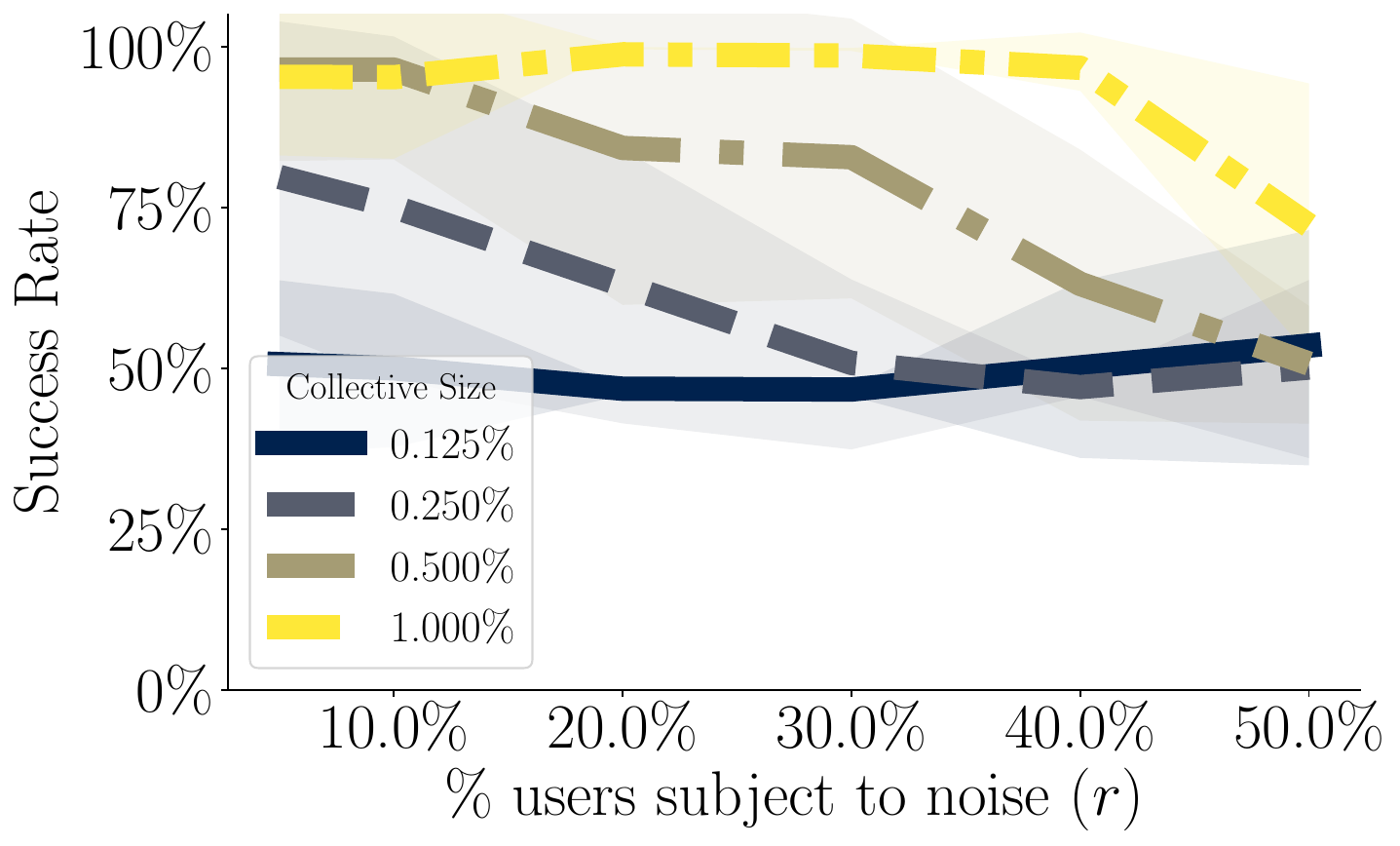}
    \caption{Feature-label Noise }
    \label{fig:feature-label-feature-label}
    \end{subfigure}
    \begin{subfigure}{0.32\textwidth}
        \centering
        \includegraphics[width=\linewidth]{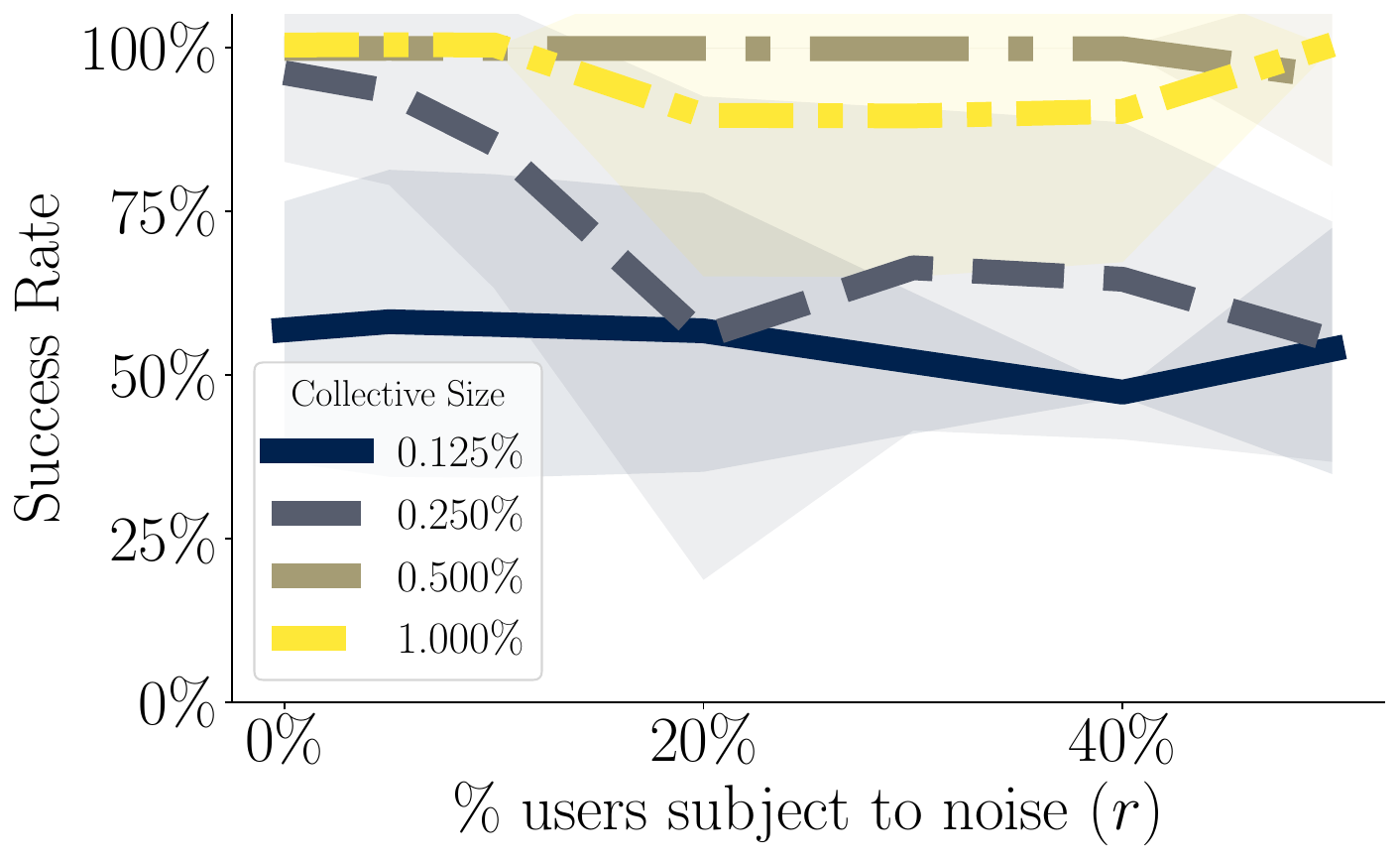}
        \caption{Feature-label Partial Noise}
        \label{fig:feature-label-feature}
    \end{subfigure}
    \begin{subfigure}{0.32\textwidth}
        \centering
        \includegraphics[width=\linewidth]{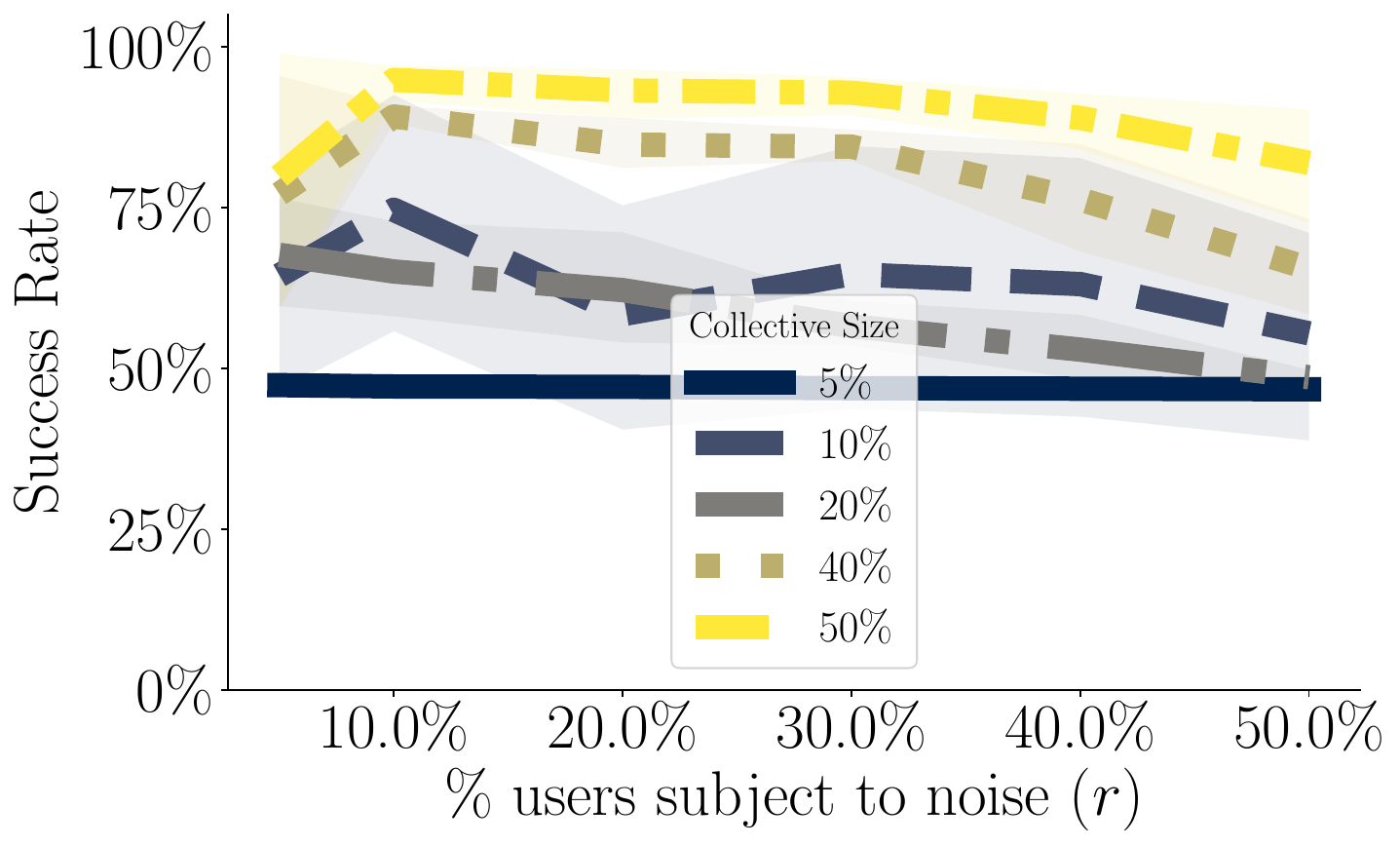}
        \caption{Feature-only}
        \label{fig:feature-only-software}
    \end{subfigure}
    \caption{Impact of noise (Random-Subset) on different strategies. We consider various levels of noise for differing levels of participation. Each curve represents a fixed level of participation. The x-axis is the percentage of that group subject to noise. The left represents the feature-label strategy subject to noise on both the features and the labels. The middle represents the feature-label strategy where only the features are subject to noise. The right figure shows the feature-only strategy. We find that when noise affects both the feature and label, the decline in success is significant - especially at lower levels of participation. When noise just affects the inputs, the declines are more modest.}
    \label{fig:rq2-software}
\end{figure*}

\subsection{RQ3: Impact of Baseline Frequency}
Here, we examine how noise impacts target classes that appear more frequently in the $\mathcal{P}_0$ distribution compared to less commonly seen ones. \Cref{fig:rq3-feature-label} shows the impact for feature-label strategies. We find that, counterintuitively, at higher levels of participation (e.g $0.25\%$ and above) low baseline classes are less susceptible to noise.  %
The opposite holds true in \Cref{fig:rq3-feature-only}, where the high frequency target class sees some change (some positive/some negative) with the presence of noise, the low frequency target class is mostly negatively affected by noise. This could be since, unlike in the feature-label strategy, the feature-only strategy cannot boost the presence of a target class in the training data ; diluting the signal with noise may more directly impact success.

\begin{figure*}[t]
   \centering
   \begin{subfigure}{0.45\textwidth}
       \centering
       \includegraphics[width=\linewidth]{figures/efficacy_vs_noise_plots_strategy=both_change-noise_label=0-noise_type=random-target=Software_Developer-char=left.pdf}
       \caption{High baseline ($\approx 50\%$)}
       \label{fig:feature-label-high}
   \end{subfigure}
   \begin{subfigure}{0.45\textwidth}
       \centering
       \includegraphics[width = \linewidth]{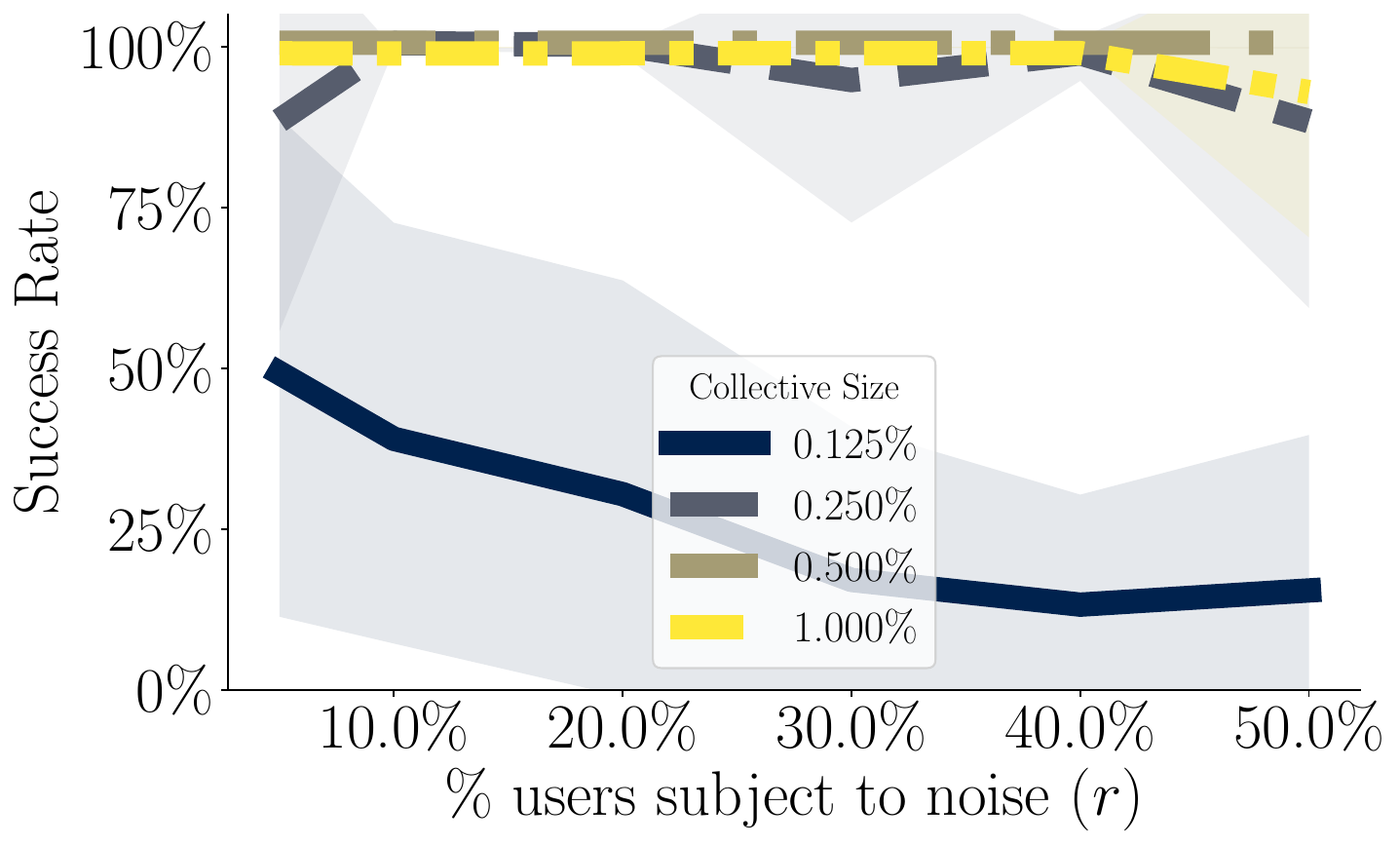}
   \caption{Low baseline  ($\approx 10\%$)}
   \label{fig:feature-label-low}
   \end{subfigure}
   \caption{Interaction between underlying frequency and susceptibility to noise (Random-subset) for feature-label strategy. Here we see for moderate levels of participation ($0.25\%$ and above), the low baseline scenario is more robust to noise.}
   \label{fig:rq3-feature-label}
\end{figure*}

\begin{figure*}[t]
   \centering
   \begin{subfigure}{0.45\textwidth}
       \centering
       \includegraphics[width=\linewidth]{figures/efficacy_vs_noise_plots_strategy=resume_only-noise_label=0-noise_type=random-target=Software_Developer-char=left.pdf}
       \caption{High frequency class}
       \label{fig:feature-only-high}
   \end{subfigure}
   \begin{subfigure}{0.45\textwidth}
       \centering
       \includegraphics[width = \linewidth]{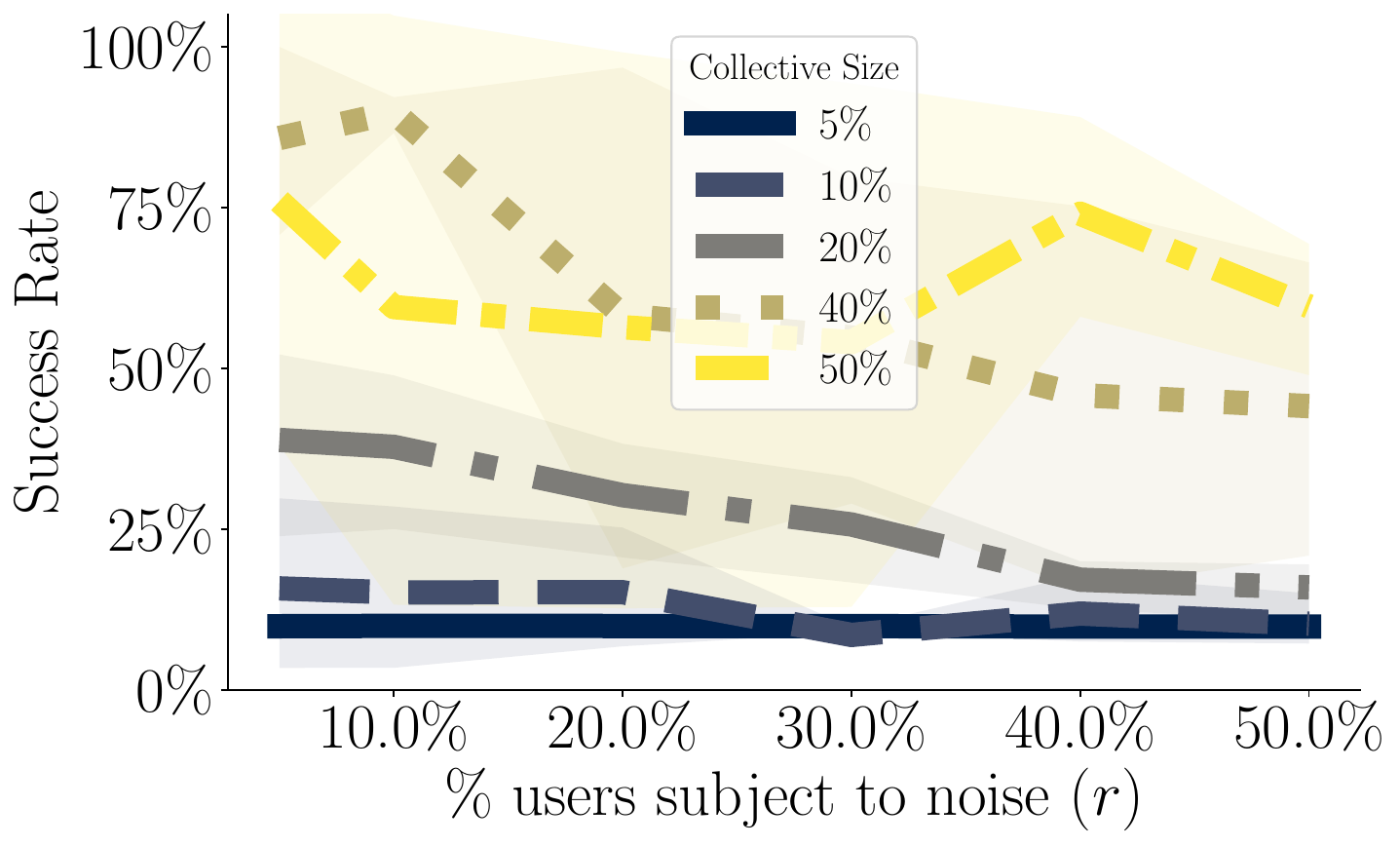}
   \caption{Low frequency class}
   \label{fig:feature-only-low}

   \end{subfigure}
   \caption{Impact of noise (Random-subset) on the feature-only strategy. Compared to the feature-label strategy, both the high and low frequencies cases are impacted by noise across differing participation levels. The high frequency case, however, sees more gradual declines in the success rate compared to the low frequency case.}
   \label{fig:rq3-feature-only}
\end{figure*}

\section{Discussion}

\subsection{Analysis on Multiple Collectives}
\label{sec:discus_multiple_collectives}
\cite{karan2025algorithmiccollectiveactioncollectives} examined how two distinct collectives with different objectives impact each other's efficacy -- in short, when the same signal set is being used to target different classes, both collective's success rate is greatly reduced. This can be explained by the suboptimality gap ($\Delta_1^2(y_1^*)$) and the cross-signal overlap ($\mathcal{P}_2(\mathcal{X}_1)$). This overlap will be high since the two groups use the same signal. Because they are targeting two different classes, the suboptimality gap may also be large. They also find a case where two collectives, with different target classes and different character usage, still sinks both of their success rates. This can also be explained by the cross-signal overlap - if these character modifications look sufficiently ``close" to each other, this term may be large and cause conflicts.

\subsection{Trade-offs on Size and Noise}

As strategic behavior on algorithmic systems continues to grow, understanding how deviations from a single, unified collective impact on a group's objective is crucial for both organizers of collective action and system developers. \cite{xiao_lets_2025} observes the heterogeneity of members of a fan collective in terms of the actions available (e.g based on device type), as well as the detailed instructions they must give to others, which provides room for misinterpretation. This is also similar to parts of a collective action framework mentioned in \cite{karan2025algorithmiccollectiveactioncollectives} regarding the action availability and collective construction (i.e how the group is organized). 

This theoretical analysis can also be used to help define how and where organizers can place resources. In particular, organizers can try to determine whether increasing the size of the collective $(\alpha_1)$ is worth the trade-off in potentially increasing cross signal overlap, $\mathcal{P}_2(\mathcal{X}_1)$, or suboptimality gap, $\Delta_1^2(y_1^*)$. This trade-off has a traditional analogy in managing common pool resources  \cite{ostrom_governing_1990}: the importance of smaller, homogeneous groups to overcome barriers in collective action. We see it is possible that a smaller group with less noise may be able to outperform a larger group. In \Cref{fig:feature-label-feature-label} we observe a collective of size $0.25\%$ with $10\%$ noise rate outperforms a collective of size $0.5\%$ at $40\%$ noise rate. Noise here can be used to characterize a group's coordination efficiency. Different types of collectives  and algorithmic systems might exhibit trade-offs between size and minimizing noise (if such a trade-off exists). For organizers of collective action knowledge of these characteristics can help to decide whether resources should be allocated to expand the total collective size or to more tightly coordinate within a small group.

\subsection{Broader Impacts} \label{sec:ethical}
Algorithmic collective action can generally be used when there's a difference  between the goals of those who generate data and are affected by models and those who train and deploy them. Our aim is to continue the work on promoting socially valuable use cases of collective action (\cite{feng2022has, Abebe_2022}); however, we recognize that there could be malicious use cases. We believe that model developers should be aware of the strategic behavior that can lead to different data distributions.

\subsection{Limitations and Future Work} \label{sec:limitations}
We assumed that we could easily segregate between distributions. In some cases, this may be more natural (multiple collectives) while in others (benign noise) may be difficult to do in practice.
We also considered a limited set of ``noise" variations on a small set of data. Future work could explore more ecologically motivated types of deviations, including non-independent noise and adversarial actions. We also only considered a small classification task, different types of learning paradigms may be affected by noise differently -- this would be an important avenue for future work.

\section{Conclusion}

We investigated the role of multiple distributions on the success of collective action. We derived lower bounds on the success rate in relation to the cross-signal overlap and suboptimality gap. We empirically evaluated noise in collective action. We found that noise variation matters, and that noise that affects labels more severally impacts collective action. We note that, for organizers, understanding the trade-offs between group size and the potential noise is important; different systems and types of actions may push organizers to invest more in effective coordination in a small group vs expansion. As strategic interest on algorithmic systems grows, both developers of algorithms and organizers of collective action must be aware of the potential that differing distributions has on system outcomes.

\bibliographystyle{alpha}

\bibliography{ref}

\appendix

\section{Notation Table} \label{sec:appendix-notation-table}

We use considerable notation for defining relationship between collectives. Here we provide a concise table for reference as well as some intuition behind these symbols. 

\begin{table}[h]
\centering
\renewcommand{\arraystretch}{1.2}
\begin{tabular}{ll}
\toprule
\textbf{Symbol} & \textbf{Description} \\
\midrule
$TV(\cdot, \cdot)$ & Total variation distance between two distributions \\
$\epsilon_1$ & Classifier error \\
$f$ & Classifier trained on the mixture distribution \\
$\mathcal{X}$ & The entire feature space \\
$\mathcal{Y}$ & The set of labels \\ 
$\mathcal{Z}$ & The space of all training data $\mathcal{X} \times \mathcal{Y}$ \\
$\mathcal{P}$ & Observed mixture distribution \\
$\mathcal{P}_0$ & Base (non-strategic) data distribution \\
$\mathcal{P}_1$ & Distribution induced by collective one's intended strategy \\
$\mathcal{P}_2$ & Distribution induced by noise or a second collective’s behavior \\
$h_1, h_2$ & Strategies applied by collectives or noisy actors \\
$g_1, g_2$ & Signal planting functions that modify the input $x$ \\
$\alpha_1, \alpha_2$ & Fraction of population following strategy $h_1$, $h_2$ respectively \\
$\alpha$ & Total participating fraction: $\alpha = \alpha_1 + \alpha_2$ \\
$r$ & Proportion of correctly aligned members: $r = \frac{\alpha_1}{\alpha}$ \\
$\mathcal{X}_1, \mathcal{X}_2$ & Signal set induced by $g$  (\textit{e.g} $\mathcal{X}_1 = \{ g_1(x) \mid x \in \mathcal{X} \}$) \\
$\mathcal{P}_i(\mathcal{X}_j)$ & Cross-signal overlap of $P_i$ on signal set $X_j$ \\
$\Delta_i^j(y^*)$ & Suboptimality gap for $X_i$ under $P_j$ \\
$S(\alpha_1)$ & Success probability for collective one \\
$y_1^*$ & Target class label for collective one \\
\bottomrule
\end{tabular}
\caption{Summary of notation used in the paper.}
\end{table}

\section{Experimental Details} \label{sec:appendix-experimental-details}

As described in the main body we finetune a \texttt{distilbert-based-uncased} ~\cite{Sanh2019DistilBERTAD} for five epochs with default hyperparameters using Hugging Face transformer library \cite{wolf2020huggingfaces}. 
This experimental setting is the same as \cite{hardt_algorithmic_2023, karan2025algorithmiccollectiveactioncollectives}. The resume dataset from \cite{jiechieu2021skills} was split into 20,000 training points and 5,000 test points. The default or baseline strategy was to place a specific character (in this case the `\{' character) every 20 words. We evaluated on targeting the `Software Developer' class, except for the `low frequency class' where the target was `Database Administrator.'

During training, each training point is select with a probability $r$ (the noise fraction) to be affected by a specific noise condition. If the point was selected to be noised, the noise would be applied to the text, and, for certain contains, the label as well. 

\begin{description}
    \item[Random-Subset:] Instead of the intended character, a character from the list [U+2E18, `!', `?', `·'] was selected uniformly at random to be used. 
    \item[Random-Full:] Same as above, but the character chosen was from the set of all single characters in the vocabulary of the tokenizer. 
    \item[Displaced-Original:] Keep the intended character, but instead of placing every 20 words, choose uniformly at random (5-30 instances) places to insert the character. 
    \item[Displaced-Full:] Combine the character selection of Random-Full with the placement behavior of Displaced-Original.
\end{description}

To evaluate whether the signal was planted successfully, for each data point in the test set, we applied the ``true'' signal and evaluated whether the trained classifier would predict the target class. We used the top-one accuracy as done in \cite{hardt_algorithmic_2023} as the metric of success.

All experimental conditions were run 15 times. Experiments were run on a ppc64le based cluster with V100 Nvidia GPUs. Each iteration took $30-40$ minutes to complete.

\section{Proof of Theorem 3} \label{sec:appendix-proof-feature-label}
\newtheorem*{remark}{Theorem}

We restate the theorem here:

\begin{remark}

Consider distribution $\mathcal{P}_1$ and $\mathcal{P}_2$ which are distributed according to $h_1(x)$ and $h_2(x)$ respectively, where $x \sim \mathcal{P}_0$. Let $y_1^*$ be the target class. Then success for the first collective against an $\epsilon_1$ classifier to be lower bounded by
$$S(\alpha_1) \geq 1 - \frac{\alpha_2}{\alpha_1} \mathcal{P}_2(\mathcal{X}_1) \frac{(1 - \epsilon_1)\Delta_1^2(y_1^*) + \epsilon_1}{1 - 2\epsilon_1} - \frac{1-\alpha}{\alpha_1} \mathcal{P}_0(\mathcal{X}_1)\frac{(1 - \epsilon_1)\Delta_1^0(y_1^*) + \epsilon_1}{1 - 2\epsilon_1} $$

\end{remark}

\begin{proof}

We follow the same proof strategy as \cite{hardt_algorithmic_2023}.

We consider the multiple distributions present the overall data distribution $\mathcal{P}$
$\mathcal{P}_0$ base distribution; 
$\mathcal{P}_1$ group 1's distribution; 
$\mathcal{P}_2$ group 2's distribution

We write the mixture distribution as :
\begin{equation}
\mathcal{P}(x, y_1^*) = \alpha_1 \mathcal{P}_1(x, y_1^*) + \alpha_2 \mathcal{P}_2(x, y_1^*) + (1 - \alpha) \mathcal{P}_0(x, y_1^*)
\end{equation}

We also define the suboptimality gap on distribution $i$ for signal set $j$ and target label $y^*$
\begin{equation}
\Delta^j_{i}(y^*) = \max_{x \in X_i^*} (\max_{y \in \mathcal{Y}} \mathcal{P}_j(y | x) - \mathcal{P}_j (y^* | x))
\end{equation}

We define the suboptimality gap for a specific point $x$ as 

\begin{equation}
\Delta^j_{i, x}(y^*) =  \max_{y \in \mathcal{Y}} \mathcal{P}_j(y | x) - \mathcal{P}_j (y^* | x)
\end{equation}

Our goal is to find, for which value of $\alpha_1$ can we guarantee the model to classify a point $x \in \mathcal{X}_1$ to the target class $y_1^*$

First consider an $\epsilon = 0$ classifier. If the model $f$ classifies any point $x \in \mathcal{X}$ to $y_1^*$ , it must mean that 
$\mathcal{P}(y_1^* | x) > \mathcal{P}(y_1 | x)$ or equivalently 
$\mathcal{P}(x, y_1^*) - \mathcal{P}(x, y_1 ) > 0$
for every $y_1 \neq y_1^*$. 

In this strategy, both the features and labels are changed. Since $\mathcal{P}_1$ is the distribution which correctly performs the intended collective action, every point $x_1^* \in \mathcal{X}_1$ maps to $y_1^*$. Therefore we can simplify and get

$$\mathcal{P}(x, y_1^*) = \alpha_1 \mathcal{P}_1(x) + \alpha_2 \mathcal{P}_2(x, y_1^*) + (1 - \alpha) \mathcal{P}_0(x, y_1^*)$$

Now, for $y \neq y_1^*$ we have

$$\mathcal{P}(x, y) = \alpha_1 \mathcal{P}_1(x, y) + \alpha_2 \mathcal{P}_2(x, y) + (1 - \alpha) \mathcal{P}_0(x, y)$$

Given $\mathcal{P}_1$ maps everything to $y_1^*$ and nothing else this first term is $0$ so we can write simplify to 

$$\mathcal{P}(x, y) = \alpha_2 \mathcal{P}_2(x, y) + (1 - \alpha) \mathcal{P}_0(x, y)$$

So we can write $\mathcal{P}(x, y_1^*) - \mathcal{P}(x, y) > 0$ as 
 
$$\alpha_1 \mathcal{P}_1(x) + \alpha_2 \mathcal{P}_2(x, y_1^*) + (1 - \alpha) \mathcal{P}_0(x, y_1^*) - (\alpha_2 \mathcal{P}_2(x, y) + (1 - \alpha) \mathcal{P}_0(x, y)) > 0$$

After rearranging we have

$$\alpha_1 \mathcal{P}_1(x) \geq  \alpha_2 \mathcal{P}_2(x)*(\mathcal{P}_2(y|x) -\mathcal{P}_2(y_1 | x)) + (1-\alpha) \mathcal{P}_0(x) (\mathcal{P}_0(y | x) - \mathcal{P}_0(y_1^* | x))$$

This must hold true for any $y$ so we can replace the rhs terms by $\Delta^j_{i, x}(y^*)$, in other words we have a sufficient condition for the size of $\alpha_1$ as

$$\alpha_1 \geq  \frac{\mathcal{P}_2(x)}{\mathcal{P}_1(x)}\alpha_2 \Delta^2_{1, x}(y_1^*) + \frac{\mathcal{P}_0(x)}{\mathcal{P}_1(x)}(1-\alpha) \Delta^0_{1, x}(y_1^*)$$

\begin{align*}
      S(\alpha)& =   \Pr_{x\sim \mathcal{P}_1}\left\{f(x)=y_1^*\right\} \\
& \geq  \Pr_{x\sim \mathcal{P}_1}\Biggl\{\alpha_1 \geq  \frac{\mathcal{P}_2(x)}{\mathcal{P}_1(x)}\alpha_2 \Delta^2_{1, x}(y_1^*) + \frac{\mathcal{P}_0(x)}{\mathcal{P}_1(x)}(1-\alpha) \Delta^0_{1, x}(y_1^*) \Biggl\} \\
& =   \E_{x \sim \mathcal{P}_1}\mathbf{1}\Biggl\{\alpha_1 \geq  \frac{\mathcal{P}_2(x)}{\mathcal{P}_1(x)}\alpha_2 \Delta^2_{1, x}(y_1^*) + \frac{\mathcal{P}_0(x)}{\mathcal{P}_1(x)}(1-\alpha) \Delta^0_{1, x}(y_1^*) \Biggl\} \\
& =   \E_{x \sim \mathcal{P}_1}\mathbf{1}\Biggl\{1 -   \frac{\mathcal{P}_2(x)}{\mathcal{P}_1(x)}\frac{\alpha_2}{\alpha_1} \Delta^2_{1, x}(y_1^*) - \frac{\mathcal{P}_0(x)}{\mathcal{P}_1(x)}\frac{1-\alpha}{\alpha_1} \Delta^0_{1, x}(y_1^*) \geq 0 \Biggl\} \\
& \ge \E_{x \sim \mathcal{P}_1}\left[1 -   \frac{\mathcal{P}_2(x)}{\mathcal{P}_1(x)}\frac{\alpha_2}{\alpha_1} \Delta^2_{1, x}(y_1^*) - \frac{\mathcal{P}_0(x)}{\mathcal{P}_1(x)}\frac{1-\alpha}{\alpha_1} \Delta^0_{1, x}(y_1^*)  \right]\\
&  1 - \frac{\alpha_2}{\alpha_1} \mathbb{E}_{x \sim \mathcal{P}_1} \left[  \frac{\mathcal{P}_2(x)}{\mathcal{P}_1(x)} \Delta^2_{1, x}(y_1^*) \right] - \frac{1-\alpha}{\alpha_1} \mathbb{E}_{x \sim \mathcal{P}_1} \left[\frac{\mathcal{P}_0(x)}{\mathcal{P}_1(x)} \Delta^0_{1, x}(y_1^*)  \right] \\
&  \geq 1 - \frac{\alpha_2}{\alpha_1} \mathcal{P}_2(\mathcal{X}_1) \Delta^2_{1}(y_1^*)  - \frac{1-\alpha}{\alpha_1} \mathcal{P}_0(\mathcal{X}_1) \Delta^0_{1}(y_1^*)\\
\end{align*}

With the final line using the fact that the delta we max over all $x$'s

Now for distribution where $TV(P, P') \leq \epsilon_1$. By Lemma 11 in \cite{hardt_algorithmic_2023} we have that $\mathcal{P'}(y^* | x ) > \mathcal{P'}(y| x)$ when $\mathcal{P}(y^*|x) > \mathcal{P}(y | x) + \frac{\epsilon_{1}}{1 - \epsilon_{1}}$

We than write this as 
$$\mathcal{P}(y_1^* | x)\mathcal{P}(x) > \mathcal{P}(y|x)\mathcal{P}(x) + \frac{\epsilon_1}{1 - \epsilon_1}\mathcal{P}(x)$$

Following the same steps, procedure as before we get

$$\alpha_1  > \alpha_2 \frac{\mathcal{P}_2(x)}{\mathcal{P}_1(x)}\biggl(\frac{(1 -\epsilon_1)\Delta_2^1(y_1^*) + \epsilon_1}{1 + 2\epsilon_1} \biggl) + (1-\alpha)\frac{\mathcal{P}_0(x)}{\mathcal{P}_1(x)} \biggl(\frac{(1  - \epsilon_1)\Delta_0^1(y_1^*) + \epsilon_1}{1+2\epsilon_1}\biggl)$$

Computing the $S(\alpha_1)$ again we get the condition we get

$$S(\alpha_1) \geq 1 - \frac{\alpha_2}{\alpha_1} \mathcal{P}_2(\mathcal{X}_1) \cdot \frac{(1 - \epsilon_1)\Delta_2^1(y_1^*) + \epsilon_1}{1 - 2\epsilon_1} - \frac{1-\alpha}{\alpha_1} \mathcal{P}_0(\mathcal{X}_1)\frac{(1 - \epsilon_1)\Delta_0^1(y_1^*) + \epsilon_1}{1 - 2\epsilon_1} $$

\end{proof}

\section{Proof of Theorem 4} \label{sec:appendix-proof-feature-only}

We restate the theorem:

\begin{remark}[Feature-only with two distributions]
Consider distribution $\mathcal{P}_1$ and $\mathcal{P}_2$ which are distributed by $h_1(x)$ and $h_2(x)$ respectively, where $x \sim \mathcal{P}_0$. Suppose there exist a $p$ such that $\mathcal{P}_0(y^* | x) \geq p, \forall x \in \mathcal{X}$. Then success for the first collective against an $\epsilon_1$ classifier (against $P^*_1)$ is lower bounded by
$$S(\alpha_1) \geq 1 - \frac{\alpha_2}{\alpha_1} \cdot \frac{\mathcal{P}_2(X_1^*) \cdot \Delta_1^2(y^*) (1 - \epsilon_1)}{ p(1 - \epsilon_1) - \epsilon_1} - \frac{1-\alpha}{\alpha} \mathcal{P}_0(\mathcal{X}_1) \cdot \frac{ (1 - p)(1 - \epsilon_1) +\epsilon_1}{ p(1 - \epsilon_1) - \epsilon_1} $$
\end{remark}

\begin{proof}
Similar to Theorem 3, we start with the $\epsilon = 0$. We require that, for any $x^* \in \mathcal{X}_1$ that $\mathcal{P}(y_1^* | x^* ) > \mathcal{P}(y | x^*)$ $\forall y \neq y_1^*$. This is equivalent to $\mathcal{P}(x^*, y_1^*) > \mathcal{P}(x^*, y)$ Our goal is find a sufficient condition for $\alpha_1$

We get a lower bound for $\mathcal{P}(x^*, y^*)$ by writing the mixture distribution as.
$$\mathcal{P}(x^*, y^*) = \alpha_1 \mathcal{P}_1(x^*, y^*) + \alpha_2 \mathcal{P}_2(x^*, y^*) + (1 - \alpha) \mathcal{P}_0(x^*, y^*) \geq  \alpha \mathcal{P}_1(x^*, y^*) + \alpha_2 \mathcal{P}_2(x^*, y^*) $$ where $\alpha = \alpha_1 + \alpha_2$

We also have when $y \neq y_1^*$ that $$\mathcal{P}(x^*, y) = \alpha_1 \mathcal{P}_1(x^*, y) + \alpha_2 \mathcal{P}_2(x^*, y) + (1 - \alpha) \mathcal{P}_0(x^*, y) =  \alpha_2 \mathcal{P}_2(x^*, y) + (1 - \alpha) \mathcal{P}_0(x^*, y)$$ 

Hence if
$$\alpha_1 \mathcal{P}_1(x^*, y^*) + \alpha_2 \mathcal{P}_2(x^*, y^*) \geq \alpha_2 \mathcal{P}_2(x^*, y) + (1 - \alpha) \mathcal{P}_0(x^*, y)$$
than $\mathcal{P}(x^*, y^*) > \mathcal{P}(x^*, y)$

Rearranging we have 
$$\alpha_1 \mathcal{P}_1(x^*, y^*)  \geq \alpha_2 \mathcal{P}_2(x^*, y) - \alpha_2 \mathcal{P}_2(x^*, y^*) + (1 - \alpha) \mathcal{P}_0(x^*, y)$$

$$\alpha_1 \mathcal{P}_1(x^*, y^*)  \geq \alpha_2 \mathcal{P}_2(x^*) ( \mathcal{P}_2(y | x^*) -  \mathcal{P}_2(y^* | x^*)) + (1 - \alpha) \mathcal{P}_0(y | x^*) \mathcal{P}_0(x^*)$$

$$\alpha_1  \geq \frac{\alpha_2 \mathcal{P}_2(x^*) ( \mathcal{P}_2(y | x^*) -  \mathcal{P}_2(y^* | x^*))}{\mathcal{P}_1(x^*, y^*)} + (1 - \alpha) \frac{\mathcal{P}_0(y | x^*) \mathcal{P}_0(x^*)}{\mathcal{P}_1(x^*, y^*)}$$

For our bound on $\alpha_1$ we seek to maximize the rhs. We can do this by upper bounding $ \mathcal{P}_2(y | x^*) -  \mathcal{P}_2(y^* | x^*))$ by the suboptimality gap $\Delta_1^2$. By assumption we have that $\mathcal{P}_0(y^* | x) \geq p$ for all $x \in \mathcal{X}$
and hence $\mathcal{P}_0(y | x^*) \leq 1-p$. For the denominator we note that  $\mathcal{P}_1(x^*, y*) \geq \mathcal{P}_0(g^{-1}(x^*), y^*) \geq p \mathcal{P}_0(g^{-1}(x^*)$ Using this lowerbound in the denominator we get our required alpha must be at least

$$\alpha_1  \geq \alpha_2 \cdot \frac{ \mathcal{P}_2(x^*) \Delta^2_1}{ p \mathcal{P}_0(g^{-1}(x^*)) } + (1 - \alpha) \frac{(1 - p)\mathcal{P}_0(x^*)}{p \mathcal{P}_0(g^{-1}(x^*))}$$

To compute the success rate we have
$$S(\alpha) = Pr_{x \sim \mathcal{P}_0^*}\{f(g(x)) = y^*\}$$

$$S= \sum_{x^* \in \mathcal{X}_1} \Pr_{x \sim \mathcal{P}_0^*} \{f(g(x)) = y^* |x \in g^{-1} (x^*) \} Pr_{x \in \mathcal{P}_0} \{x \in g^{-1}(x^*)\}$$

$$S= \sum_{x^* \in \mathcal{X}_1} \mathbf{1}\{f(x^*) = y^*\} P_{0}(g^{-1}(x^*))\}$$

For a given fixed $x^*$ we have 
$$\mathbf{1}\Biggl\{f(x^*) = y^* \biggl\} \geq \mathbf{1} \biggl\{\alpha_1  \geq \alpha_2 \cdot \frac{ \mathcal{P}_2(x^*) \Delta^2_1}{ p \mathcal{P}_0(g^{-1}(x^*)) } + (1 - \alpha) \frac{(1 - p)\mathcal{P}_0(x^*)}{p \mathcal{P}_0(g^{-1}(x^*))} \Biggl \}$$

$$=\mathbf{1}\Biggl\{1 - \frac{\alpha_2}{\alpha_1} \cdot \frac{ \mathcal{P}_2(x^*) \Delta^2_1}{ p \mathcal{P}_0(g^{-1}(x^*)) } - \frac{(1 - \alpha)}{\alpha_1} \frac{(1 - p)\mathcal{P}_0(x^*)}{p \mathcal{P}_0(g^{-1}(x^*))}  > 0 \Biggl \}$$

$$\geq 1 - \frac{\alpha_2}{\alpha_1} \cdot \frac{ \mathcal{P}_2(x^*) \Delta^2_1}{ p \mathcal{P}_0(g^{-1}(x^*)) } - \frac{(1 - \alpha)}{\alpha_1} \frac{(1 - p)\mathcal{P}_0(x^*)}{p \mathcal{P}_0(g^{-1}(x^*))} $$

Computing the summation we get

$$Pr_{x \sim \mathcal{P}_0^*}\{f(g(x)) = y^*\}$$ 

$$= 1 - \sum_{x^* \in \mathcal{X}_1} \frac{\alpha_2}{\alpha_1} \cdot \frac{ \mathcal{P}_2(x^*) \Delta^2_1}{ p \mathcal{P}_0(g^{-1}(x^*)) } \cdot  \mathcal{P}_0(g_1^{-1}(x^*)) - \sum_{x^* \in \mathcal{X}_1} \frac{(1 - \alpha)}{\alpha_1} \frac{(1 - p)\mathcal{P}_0(x^*)}{p \mathcal{P}_0(g^{-1}(x^*))} \cdot \mathcal{P}_0(g_1^{-1}(x^*))$$

$$\geq 1 - \frac{\alpha_2}{p\alpha_1}\mathcal{P}_2(\mathcal{X}_1) \Delta_1^2 - \frac{(1 - \alpha) \cdot (1 - p)}{p\alpha_1} \mathcal{P}_0(X^*)   $$

Which gives us the $\epsilon = 0$ case.

When considering $\epsilon > 0$, we again cite Lemma 11 in \cite{hardt_algorithmic_2023}, that: $\mathcal{P}(x^*,  y^*) > \mathcal{P}(x^*, y) + \frac{\epsilon}{1-\epsilon} \mathcal{P}(x^*)$

Expanding all terms out with the mixture distribution we have
$$  \mathcal{P}(x^*, y) = \alpha \mathcal{P}_1(x^*, y^*) + \alpha_2 \mathcal{P}_2(x^*, y^*) + (1 - \alpha_1 - \alpha_2) \mathcal{P}_0(x^*, y^*) \geq  \alpha \mathcal{P}_1(x^*, y^*) + \alpha_2 \mathcal{P}_2(x^*, y^*)$$

Plugging this into our expression we have 

$$  \alpha \mathcal{P}_1(x^*, y^*) + \alpha_2 \mathcal{P}_2(x^*, y^*) > \alpha_2 * \mathcal{P}_2(x^*, y) + (1 - \alpha) \mathcal{P}_0(x^*, y) +  \frac{\epsilon}{1-\epsilon} ( \alpha \mathcal{P}_1(x^*) + \alpha_2 \mathcal{P}_2(x^*)+ (1 - \alpha) \mathcal{P}_0(x^*))$$

Using conditionals we can rewrite this as 

$$  \alpha_1 \mathcal{P}_1(y^* | x^*) \mathcal{P}_1(x^*) + \alpha_2 \mathcal{P}_2(y^* | x^*) \mathcal{P}_2(x^*) >$$ 
$$\alpha_2 * \mathcal{P}_2(y | x^*)\mathcal{P}_2(x^*) + (1 - \alpha) \mathcal{P}_0(y | x^*) \mathcal{P}_0(x^*) +  \frac{\epsilon}{1-\epsilon} ( \alpha_1 \mathcal{P}_1(x^*) + \alpha_2 \mathcal{P}_2(x^*)+ (1 - \alpha) \mathcal{P}_0(x^*))$$

Rearranging with $\alpha_1$ we get

$$  \alpha_1  > \alpha_2 \frac{\mathcal{P}_2(x^*)}{\mathcal{P}_1(x^*)} \cdot \frac{(\mathcal{P}_2(y | x^*) -  \mathcal{P}_2(y^* | x^*))}{\mathcal{P}_1(y^* | x^*)  - \frac{\epsilon}{1-\epsilon}} + (1-\alpha)\frac{\mathcal{P}_0(x^*)}{\mathcal{P}_1(x^*)} \cdot \frac{(\mathcal{P}_0(y | x^*) + \frac{\epsilon}{1-\epsilon})}{(\mathcal{P}_1(y^* | x^*)  - \frac{\epsilon}{1-\epsilon})}$$

We note that $\mathcal{P}_2(y | x^*) - \mathcal{P}_2(y^*| x^*)$ can be upper bounded by $\Delta_1^2(y^*)$ We can also upper bound and lower bound the numerators and denominators with the following expressions 

$\mathcal{P}_1(x^*) \geq \mathcal{P}_0(g^{-1}(x^*))$ as every point $x^*$ for $\mathcal{P}_1$ is mapped from some point in $\mathcal{P}_0$ and likewise
$\mathcal{P}_1(y^*  | x^*) \geq \mathcal{P}_0(y^* | g_1^{-1}(x^*)) \geq p$ 
From the support of $P_0(y_1^* | x) \geq p$ we have
$\mathcal{P}_0(y| x^*) \leq 1 - p$ for $y \neq y_1^*$

All together we get

$$  \alpha_1  > \alpha_2 \frac{\mathcal{P}_2(x^*)}{\mathcal{P}_0(g_1^{-1}(x^*))} \cdot \frac{\Delta_1^2(y^*)}{p  - \frac{\epsilon}{1-\epsilon}} + (1-\alpha)\frac{\mathcal{P}_0(x^*)}{\mathcal{P}_0(g_1^{-1}(x^*))} \cdot \frac{(1-p + \frac{\epsilon}{1-\epsilon})}{(p  - \frac{\epsilon}{1-\epsilon})}$$

With this we can compute the expected value as above 

For a given fixed $x^*$ we have 
$$\mathbf{1}\Biggl\{f(x^*) = y^* \biggl\} \geq \mathbf{1} \biggl\{ \alpha_1  > \alpha_2 \frac{\mathcal{P}_2(x^*)}{\mathcal{P}_0(g_1^{-1}(x^*))} \cdot \frac{\Delta_1^2(y^*)}{p  - \frac{\epsilon}{1-\epsilon}} + (1-\alpha)\frac{\mathcal{P}_0(x^*)}{\mathcal{P}_0(g_1^{-1}(x^*))} \cdot \frac{(1-p + \frac{\epsilon}{1-\epsilon})}{(p  - \frac{\epsilon}{1-\epsilon})} \Biggl \}$$

$$=\mathbf{1}\Biggl\{1 -  \frac{\alpha_2}{\alpha_1} \frac{\mathcal{P}_2(x^*)}{\mathcal{P}_0(g_1^{-1}(x^*))} \cdot \frac{\Delta_1^2(y^*)}{p  - \frac{\epsilon}{1-\epsilon}} - \frac{1-\alpha}{\alpha_1}\frac{\mathcal{P}_0(x^*)}{\mathcal{P}_0(g_1^{-1}(x^*))} \cdot \frac{(1-p + \frac{\epsilon}{1-\epsilon})}{(p  - \frac{\epsilon}{1-\epsilon})}  > 0 \Biggl \}$$

$$\geq 1 -  \frac{\alpha_2}{\alpha_1} \frac{\mathcal{P}_2(x^*)}{\mathcal{P}_0(g_1^{-1}(x^*))} \cdot \frac{\Delta_1^2(y^*)}{p  - \frac{\epsilon}{1-\epsilon}} - \frac{1-\alpha}{\alpha_1}\frac{\mathcal{P}_0(x^*)}{\mathcal{P}_0(g_1^{-1}(x^*))} \cdot \frac{(1-p + \frac{\epsilon}{1-\epsilon})}{(p  - \frac{\epsilon}{1-\epsilon})} $$

Plugging back in we get 

$$Pr_{x \sim \mathcal{P}_0^*}\{f(g(x)) = y^*\}$$ 

$$= 1 - \sum_{x^* \in \mathcal{X}_1}\frac{\alpha_2}{\alpha_1} \frac{\mathcal{P}_2(x^*)}{\mathcal{P}_0(g_1^{-1}(x^*))} \cdot \frac{\Delta_1^2(y^*)}{p  - \frac{\epsilon}{1-\epsilon}} \cdot  \mathcal{P}_0(g_1^{-1}(x^*)) - \sum_{x^* \in \mathcal{X}_1}  \frac{1-\alpha}{\alpha_1}\frac{\mathcal{P}_0(x^*)}{\mathcal{P}_0(g_1^{-1}(x^*))} \cdot \frac{(1-p + \frac{\epsilon}{1-\epsilon})}{(p  - \frac{\epsilon}{1-\epsilon})} \cdot  \mathcal{P}_0(g_1^{-1}(x^*))$$

$$\geq 1 - \frac{\alpha_2}{\alpha_1} \cdot \frac{\mathcal{P}_2(X^*) \cdot \Delta_1^2(y^*)}{ p - \frac{\epsilon}{1 - \epsilon}} - \frac{1-\alpha}{\alpha} \mathcal{P}_0(X^*) \cdot \frac{ 1 - p + \frac{\epsilon}{1-\epsilon}}{ p - \frac{\epsilon}{1 - \epsilon}} $$

$$\geq 1 - \frac{\alpha_2}{\alpha_1} \cdot \frac{\mathcal{P}_2(X^*) \cdot \Delta_1^2(y^*) (1 - \epsilon)}{ p(1 - \epsilon) - \epsilon} - \frac{1-\alpha}{\alpha} \mathcal{P}_0(X^*) \cdot \frac{ (1 - p)(1 - \epsilon) +\epsilon}{p(1 - \epsilon) - \epsilon} $$

\end{proof}

\section{Extension to many collectives} \label{sec:appendix_n}

We can extend the mixture distribution to include an arbitrary number of distributions. This can be written as 
$$\mathcal{P} = \sum_{i = 1}^n \alpha_i\mathcal{P}_i + (1 - \alpha) \mathcal{P}_0$$ where $\alpha = \sum_{i=1}^n \alpha_i$

By following the same procedure in \Cref{sec:appendix-proof-feature-label} we provide a claim for success for collective one against an arbitrary number of distributions.

\begin{theorem}
The success of collective action for the feature label strategy against many distributions is lower bounded by
$$S(\alpha_1) \geq 1 - \sum_{i = 2}^n\frac{\alpha_i}{\alpha_1} \mathcal{P}_i(\mathcal{X}_1) \cdot \frac{(1 - \epsilon_1)\Delta_i^1(y_1^*) + \epsilon_1}{1 - 2\epsilon_1} - \frac{1-\alpha}{\alpha_1} \mathcal{P}_0(\mathcal{X}_1)\frac{(1 - \epsilon_1)\Delta_0^1(y_1^*) + \epsilon_1}{1 - 2\epsilon_1} $$
\end{theorem}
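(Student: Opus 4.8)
The plan is to mirror the two-distribution argument of \Cref{thm:feature-label-two} (established in \Cref{sec:appendix-proof-feature-label}), exploiting the fact that the only structural change is that the single competing term $\alpha_2\mathcal{P}_2$ is replaced by the sum $\sum_{i=2}^n \alpha_i\mathcal{P}_i$, and that the mixture $\mathcal{P}=\sum_{i=1}^n \alpha_i\mathcal{P}_i + (1-\alpha)\mathcal{P}_0$ is linear in its components. First I would write the joint mixture at a signal point $x\in\mathcal{X}_1$ for the target label and for an arbitrary competitor $y\neq y_1^*$, using that the correctly coordinated distribution $\mathcal{P}_1$ maps all of $\mathcal{X}_1$ to $y_1^*$ (so $\mathcal{P}_1(x,y_1^*)=\mathcal{P}_1(x)$ and $\mathcal{P}_1(x,y)=0$ for $y\neq y_1^*$). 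This yields $\mathcal{P}(x,y_1^*)=\alpha_1\mathcal{P}_1(x)+\sum_{i=2}^n\alpha_i\mathcal{P}_i(x,y_1^*)+(1-\alpha)\mathcal{P}_0(x,y_1^*)$ and $\mathcal{P}(x,y)=\sum_{i=2}^n\alpha_i\mathcal{P}_i(x,y)+(1-\alpha)\mathcal{P}_0(x,y)$.

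Next, for an $\epsilon_1=0$ classifier I would impose the correct-classification requirement $\mathcal{P}(x,y_1^*)>\mathcal{P}(x,y)$ for every $y\neq y_1^*$ and rearrange into a sufficient lower bound on $\alpha_1$. Dividing through by $\mathcal{P}_1(x)$ and bounding each competing bracket $\mathcal{P}_i(y\mid x)-\mathcal{P}_i(y_1^*\mid x)$ by the corresponding pointwise suboptimality gap gives $\alpha_1\ge\sum_{i=2}^n\frac{\mathcal{P}_i(x)}{\mathcal{P}_1(x)}\alpha_i\Delta^i_{1,x}(y_1^*)+\frac{\mathcal{P}_0(x)}{\mathcal{P}_1(x)}(1-\alpha)\Delta^0_{1,x}(y_1^*)$. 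I would then convert this into a success bound by the same indicator/expectation chain as in the $n=2$ case: lower-bound $S(\alpha_1)=\Pr_{x\sim\mathcal{P}_1}\{f(x)=y_1^*\}$ by the expectation over $\mathcal{P}_1$ of $1$ minus the right-hand terms, pull each gap out at its maximum over $\mathcal{X}_1$, and recognize the change-of-measure integral $\E_{x\sim\mathcal{P}_1}[\mathcal{P}_i(x)/\mathcal{P}_1(x)]$ over the signal set as the overlap $\mathcal{P}_i(\mathcal{X}_1)$. This produces the $\epsilon_1=0$ bound $S(\alpha_1)\ge 1-\sum_{i=2}^n\frac{\alpha_i}{\alpha_1}\mathcal{P}_i(\mathcal{X}_1)\Delta^i_1(y_1^*)-\frac{1-\alpha}{\alpha_1}\mathcal{P}_0(\mathcal{X}_1)\Delta^0_1(y_1^*)$.

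Finally, to handle $\epsilon_1>0$ I would invoke Lemma 11 of \cite{hardt_algorithmic_2023}, replacing the strict inequality by $\mathcal{P}(y_1^*\mid x)>\mathcal{P}(y\mid x)+\frac{\epsilon_1}{1-\epsilon_1}$, multiplying through by $\mathcal{P}(x)$, and carrying the extra additive $\frac{\epsilon_1}{1-\epsilon_1}$ term through the same rearrangement. As in the two-distribution proof, this converts each gap factor $\Delta$ into $\frac{(1-\epsilon_1)\Delta+\epsilon_1}{1-2\epsilon_1}$, delivering the claimed bound.

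The main obstacle is handling the worst-case competitor $y$ once there are several competing distributions. For a single fixed $y$ the rearranged condition is a genuine sum over $i$ of terms $\mathcal{P}_i(y\mid x)-\mathcal{P}_i(y_1^*\mid x)$, and these are not in general maximized by the same $y$ across different $i$. I would justify replacing each bracket by its own, independently maximized gap $\Delta^i_{1,x}(y_1^*)$ via the elementary inequality $\max_y\sum_i a_i(y)\le\sum_i\max_y a_i(y)$, so that the decoupled sum is a valid—though possibly loose—upper bound on the total competing mass. Once this decoupling step is in place, the remainder is linear bookkeeping identical to the $n=2$ argument, so no genuinely new idea is required.
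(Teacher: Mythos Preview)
Your proposal is correct and follows exactly the same route as the paper: expand the mixture with $n$ components, use that $\mathcal{P}_1$ puts all mass on $y_1^*$ over $\mathcal{X}_1$, rearrange the classification condition into a lower bound on $\alpha_1$, and then apply Lemma~11 of \cite{hardt_algorithmic_2023} for the $\epsilon_1>0$ case. The paper's own proof of this theorem is essentially a one-line pointer back to the $n=2$ argument in \Cref{sec:appendix-proof-feature-label}, so your write-up is in fact considerably more explicit; in particular, your observation that handling the worst-case competitor $y$ across several $\mathcal{P}_i$ requires the elementary inequality $\max_y\sum_i a_i(y)\le\sum_i\max_y a_i(y)$ makes precise a step the paper leaves implicit (it is already tacitly used in the $n=2$ proof when the $\mathcal{P}_2$ and $\mathcal{P}_0$ brackets are simultaneously replaced by their individual gaps).
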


\begin{proof}
We repeat the same expansion in \Cref{sec:appendix-proof-feature-label} for an arbitrary number of distributions.  
$$\mathcal{P} = \sum_{i = 1}^n \alpha_i\mathcal{P}_i + (1 - \alpha) \mathcal{P}_0$$
The rest follows the same procedure, as each additional mixture term can be handled independently.
\end{proof}

We can do the same for feature-only strategies. 
\begin{theorem}
The success of collective action for the feature-only strategy against many distributions is lower bounded by
$$\geq 1 - \sum_{i = 2}^n \frac{\alpha_i}{\alpha_1} \cdot \frac{\mathcal{P}_i(\mathcal{X}_1) \cdot \Delta_1^i(y_1^*) (1 - \epsilon)}{ p(1 - \epsilon) - \epsilon} - \frac{1-\alpha}{\alpha} \mathcal{P}_0(X^*) \cdot \frac{ (1 - p)(1 - \epsilon) +\epsilon}{p(1 - \epsilon) - \epsilon} $$
\end{theorem}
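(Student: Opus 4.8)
The plan is to mirror the two-distribution feature-only argument (the proof of \Cref{thm:feature-only-two} in \Cref{sec:appendix-proof-feature-only}), replacing the single competing distribution $\mathcal{P}_2$ by the sum over $i = 2, \dots, n$. The starting point is the mixture $\mathcal{P} = \sum_{i=1}^n \alpha_i \mathcal{P}_i + (1-\alpha)\mathcal{P}_0$ with $\alpha = \sum_{i=1}^n \alpha_i$, together with the observation that the mixture weights enter the relevant inequalities linearly, so each additional source $\mathcal{P}_i$ should contribute an \emph{additive and independent} penalty while cross terms between distinct $\mathcal{P}_i$ and $\mathcal{P}_j$ never appear.

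First I would reduce success at a signal point $x^* \in \mathcal{X}_1$ to a pointwise margin condition. For an $\epsilon$-suboptimal classifier, Lemma 11 of \cite{hardt_algorithmic_2023} guarantees $f(x^*) = y_1^*$ whenever $\mathcal{P}(x^*, y_1^*) > \mathcal{P}(x^*, y) + \frac{\epsilon}{1-\epsilon}\mathcal{P}(x^*)$ for every $y \neq y_1^*$. I would then expand both joint masses through the mixture, lower-bounding $\mathcal{P}(x^*, y_1^*)$ by retaining the participating mass on label $y_1^*$ exactly as in the $n=2$ appendix (the choice that also fixes the asymmetric denominators discussed below), and dropping non-negative terms where convenient.

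Next I would isolate $\alpha_1$. For each competing source $i \geq 2$, the difference $\mathcal{P}_i(y\mid x^*) - \mathcal{P}_i(y^*\mid x^*)$ is upper-bounded by the suboptimality gap $\Delta_1^i(y_1^*)$, and the base distribution's contribution is controlled using the support assumption $\mathcal{P}_0(y^*\mid x) \geq p$ (hence $\mathcal{P}_0(y\mid x^*) \leq 1-p$) together with the lower bounds $\mathcal{P}_1(x^*) \geq \mathcal{P}_0(g_1^{-1}(x^*))$ and $\mathcal{P}_1(y^*\mid x^*) \geq p$. This yields a sufficient condition of the form $\alpha_1 \geq \sum_{i=2}^n \alpha_i (\cdots)_i + (1-\alpha)(\cdots)_0$, where each summand is the same expression that appeared for $\mathcal{P}_2$. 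Dividing by $\alpha_1$, rewriting the success indicator as in the appendix, taking expectations over $x^* \in \mathcal{X}_1$ weighted by $\mathcal{P}_0(g_1^{-1}(x^*))$, and letting those weight factors cancel the denominators turns each per-point term into $\mathcal{P}_i(\mathcal{X}_1)$ and $\mathcal{P}_0(\mathcal{X}_1)$, producing the claimed bound after the algebraic simplification from the $\frac{\epsilon}{1-\epsilon}$ form to the $(1-\epsilon)$-scaled denominators $p(1-\epsilon)-\epsilon$.

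The main obstacle is bookkeeping rather than a conceptual hurdle: I must verify that the $n-1$ competing penalties genuinely decouple into a clean sum (no interaction terms survive precisely because the margin condition is linear in the mixture), and I must track the asymmetry in the stated bound, where the competing terms carry $\frac{\alpha_i}{\alpha_1}$ while the base term carries $\frac{1-\alpha}{\alpha}$. This asymmetry traces back to exactly which mass is retained when lower-bounding $\mathcal{P}(x^*, y_1^*)$, so that step warrants careful checking. Finally, distributing the Lemma-11 slack $\frac{\epsilon}{1-\epsilon}\mathcal{P}(x^*)$ across all mixture components and confirming it collapses into the same denominator $p(1-\epsilon)-\epsilon$ as in the two-distribution theorem is the one algebraic step I would double-check most closely.
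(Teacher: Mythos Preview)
Your proposal is correct and follows essentially the same route as the paper: the paper's proof simply says to repeat the expansion of \Cref{sec:appendix-proof-feature-only} with the $n$-term mixture and notes that each additional mixture component can be handled independently, which is exactly the linearity-and-decoupling argument you lay out. Your write-up is in fact more detailed than the paper's own proof, and your flagged caution about the $\tfrac{\alpha_i}{\alpha_1}$ versus $\tfrac{1-\alpha}{\alpha}$ asymmetry is well placed, since that is inherited directly from the two-distribution derivation.
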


\begin{proof}
We repeat the same expansion in \Cref{sec:appendix-proof-feature-only} for an arbitrary number of distributions with the same expansion of the mixture distribution 
and the rest follows the same procedure, as each additional mixture term can be handled independently.
\end{proof}

\end{document}